\newcommand{\coneg}{\check g}
\newcommand{\coneGamma}{\check \Gamma}
\newcommand{\conenabla}{\check \nabla}
\newcommand{\coneR}{\check R}
\newcommand{\mnote}[1]
{\protect{\stepcounter{mnotecount}}$^{\mbox{\footnotesize
$
\bullet$\themnotecount}}$ \marginpar{
\raggedright\tiny\em
$\!\!\!\!\!\!\,\bullet$\themnotecount: #1} }
\newcommand{\gamman}{h^{(n)}}
\newcommand{\ourdoteq}{\sim}
\newcounter{mnotecount}[section]
\renewcommand{\themnotecount}{\thesection.\arabic{mnotecount}}
\newtheorem{theorem}{\sc  Theorem\rm}[section]
\newtheorem{Theorem}[theorem]{\sc  Theorem\rm}
\newtheorem{definition}[theorem]{\sc  Definition\rm}
\newtheorem{lemma}[theorem]{\sc Lemma\rm}
\newtheorem{Lemma}[theorem]{\sc Lemma\rm}
\newtheorem{proposition}[theorem]{\sc Proposition\rm}
\newtheorem{Proposition}[theorem]{\sc Proposition\rm}
\newtheorem{Remark}[theorem]{\sc Remark\rm}
\newcommand{\ol}[1]{\overline{#1}{}}
\newcommand{\jlcax}[1]{}
\newcommand{\eean}{\nonumber\end{eqnarray}}
\newcommand{\og}{{\overline{g}}}
\newcommand{\kk}[1]{}
\newcommand{\beq}{\begin{equation}}
\newcommand{\FS}       
                  {F}
\newcommand{\HS} 
       {H_{\mbox{\scriptsize volume}}}
\newcommand{\eeal}[1]{\label{#1}\end{eqnarray}}
\newcommand{\bed}{\begin{deqarr}}
\newcommand{\eed}{\end{deqarr}}
\newcommand{\bedl}[1]{\begin{deqarr}\label{#1}}
\newcommand{\eedl}[2]{\arrlabel{#1}\label{#2}\end{deqarr}}
\newcommand{\mcU}{{\mycal U}}
\newcommand{\mcN}{{\mycal N}}
\newcommand{\bel}[1]{\begin{equation}\label{#1}}
\newcommand{\bea}{\begin{eqnarray}}
\newcommand{\bean}{\begin{eqnarray}\nonumber}
\newcommand{\beal}[1]{\begin{eqnarray}\label{#1}}
\newcommand{\eea}{\end{eqnarray}}
\newcommand{\Eq}[1]{Equation~\eq{#1}}
\def\typeout{:<+ #.tex}\include{#}\typeout{:<-}1{\typeout{:<+ #1.tex}\include{#1}\typeout{:<-}}
\newcommand{\qed}{\hfill $\Box$ \medskip}
\newcommand{\proof}{\noindent {\sc Proof:\ }}
\newcommand{\be}{\begin{equation}}
\newcommand{\eeq}{\end{equation}}
\newcommand{\ee}{\end{equation}}
\newcommand{\beqa}{\begin{eqnarray}}
\newcommand{\eeqa}{\end{eqnarray}}
\newcommand{\beqan}{\begin{eqnarray*}}
\newcommand{\eeqan}{\end{eqnarray*}}
\newcommand{\ba}{\begin{array}}
\newcommand{\ea}{\end{array}}
\newcommand{\hyp}{\mycal S}
\newcommand{\mcM}{{\mycal M}}
\newcommand{\mcD}{{\mycal D}}
\newcommand{\scri}{{\mycal I}}%
\newcommand{\scrip}{\scri^{+}}%
\newcommand{\warn}[1]
{\protect{\stepcounter{mnotecount}}$^{\mbox{\footnotesize
$
\bullet$\themnotecount}}$ \marginpar{
\raggedright\tiny\em
$\!\!\!\!\!\!\,\bullet$\themnotecount: {\bf Warning:} #1} }
\newcommand{\N}{\mathbb N}
\newcommand{\eq}[1]{(\ref{#1})}
\newcommand{\ptc}[1]{\mnote{{\bf ptc:}#1}}
\newcommand{\beqar}{\begin{deqarr}}
\newcommand{\eeqar}{\end{deqarr}}
\newcommand{\beaa}{\begin{eqnarray*}}
\newcommand{\eeaa}{\end{eqnarray*}}
\DeclareFontFamily{OT1}{rsfs}{}
\DeclareFontShape{OT1}{rsfs}{m}{n}{ <-7> rsfs5 <7-10> rsfs7 <10-> rsfs10}{}
\DeclareMathAlphabet{\mycal}{OT1}{rsfs}{m}{n}
\global\let\AddToReset=\@addtoreset}
\global\let\AddToReset=\@addtoreset}
\global\let\AddToReset=\@addtoreset}
\begin{document}

\title{Characteristic initial data and smoothness of Scri.\\ I. Framework and results%
\thanks{Preprint UWThPh-2014-1.}}
\author{
Piotr T. Chru\'sciel{}\thanks{Email  {Piotr.Chrusciel@univie.ac.at}, URL {
http://homepage.univie.ac.at/piotr.chrusciel}}\ \ and
Tim-Torben Paetz{}\thanks{Email  Tim-Torben.Paetz@univie.ac.at}   
 \vspace{0.5em}\\  \textit{Gravitational Physics, University of Vienna}  \\ \textit{Boltzmanngasse 5, 1090 Vienna, Austria }}
\maketitle

\vspace{-0.2em}

\begin{abstract}
We analyze the Cauchy problem for the vacuum
Einstein equations with data on a complete light-cone in an asymptotically Minkowskian space-time. We provide conditions on the free initial data which guarantee existence of global solutions of the characteristic constraint equations. We present necessary-and-sufficient conditions on characteristic
 initial data
in $3+1$ dimensions to have no logarithmic terms in an asymptotic expansion at  null infinity.
\end{abstract}

\noindent
\hspace{2.1em} PACS: 02.30.Hq, 02.30.Mv, 04.20.Ex, 04.20.Ha

\tableofcontents

\section{Introduction}

An issue of central importance in general relativity is the understanding of gravitational radiation. This has direct implications for the soon-expected direct detection of gravitational waves. The current main effort in this topic appears to be a mixture of numerical modeling and approximation methods. From this perspective there does not seem to be a need for a better understanding of the exact properties of the gravitational field in the radiation regime. However, as observations and numerics will have become routine, solid theoretical foundations for the problem will become necessary.

Now, a generally accepted framework for describing gravitational radiation seems to be the method of conformal completions of Penrose. Here a key hypothesis is that a suitable conformal rescaling of the space-time metric becomes smooth on a new manifold with boundary $\scrip$.
  One then needs to face the question, if and how such space-times can be constructed. Ultimately one would like to isolate the class of initial data, on a spacelike slice extending to spatial infinity,  the evolution of which admits a Penrose-type conformal completion at infinity, and show that the class is large enough to model all physical processes at hand. Direct attempts to carry this out  (see~\cite{Friedrich:tuebingen,Kroon6,Valiente-Kroon:2002fa} and references therein) have not been successful so far. Similarly, the asymptotic behaviour of the gravitational field established in~\cite{Ch-Kl,KlainermanNicoloBook,klainerman:nicolo:review,KlainermanNicoloPeeling,BieriZipser,LindbladRodnianski} is inconclusive as far as the smoothness of the conformally rescaled metric at $\scrip$ is concerned.  The reader is referred to~\cite{FriedrichCMP13} for an extensive discussion of the issues arising.

   On the other hand, clear-cut constructions have been carried-out in less demanding settings,  with data on characteristic surfaces as pioneered by Bondi et al.~\cite{BBM}, or with initial data with hyperboloidal asymptotics. It has been found~\cite{TorrenceCouch,ChMS,andersson:chrusciel:PRL,ACF}
    that both generic Bondi data and generic hyperboloidal data, constructed out of conformally smooth seed data, will \emph{not} lead to space-times with a smooth conformal completion. Instead, a \emph{polyhomogeneous} asymptotics of solutions of the relevant constraint equations was obtained, with logarithmic terms appearing in asymptotic expansions of the fields.

The case for the necessity of a polyhomogeneous-at-best framework, as resulting from the above work, is not waterproof:  In both cases  it is not clear whether initial data with logarithmic terms can arise from evolution of a physical system which is asymptotically flat in spacelike directions. There is a further issue with the Bondi expansions, because the framework of Bondi et al.~\cite{BBM,Sachs} does not provide a well-posed system of evolution equations for the problem at hand.

The aim of this work is to rederive the existence of obstructions to smoothness of the metric at $\scrip$ in a framework in which the evolution problem for the Einstein vacuum equations is well-posed and where free initial data are given on a light-cone extending to null infinity, or on two characteristic hypersurfaces one of which extends to infinity, or in a mixed setting where part of the data are prescribed on a spacelike surface and part on a characteristic one extending to infinity.
This can be viewed as a revisiting of the Bondi-type setting in a framework where an associated space-time is guaranteed to exist.

One of the attractive features of the characteristic Cauchy problem is that one can explicitly provide an exhaustive class of freely prescribable initial data. By ``exhaustive class" we mean that the map from the space of free initial data to the set of solutions is surjective, where ``solution'' refers to that part of space-time which is covered by the domain of dependence of the smooth part of the light-cone, or of the smooth part of the null hypersurfaces issuing normally from a smooth submanifold of codimension two.%
\footnote{This should be contrasted with the spacelike Cauchy problem, where no exhaustive method for constructing non-CMC initial data sets is known. It should, however, be kept in mind that the spacelike Cauchy problem does not suffer from the serious problem of formation of caustics, inherent to the characteristic one.}
%
There is, moreover, considerable flexibility in prescribing characteristic initial data~\cite{ChPaetz}. In this work we will concentrate on the following  approaches:

\begin{enumerate}
  \item The free data are a triple $(\mcN,[\gamma], \kappa)$,
    where $\mcN$ is a $n$-dimensional manifold,   $[\gamma]$  is a conformal class of symmetric two-covariant tensors on $\mcN$ of signature $(0,+,\ldots,+)$, and $\kappa$ is  a field of connections on the bundles of tangents to the integral curves of the kernel of $\gamma$.%
    \footnote{Recall that a connection $\nabla$ on each such bundle is uniquely described by
   writing
  $
   \nabla_r \partial_r = \kappa \partial_r
  $, in a coordinate system where $\partial_r$ is in the kernel of $\gamma$.
  Once the associated space-time has been constructed we will also have
  $
   \nabla_r \partial_r = \kappa \partial_r
  $,
  where $\nabla$ now is the covariant derivative operator associated with the space-time metric.}

  \item Alternatively, the  data are a triple $(\mcN,\check g, \kappa)$, where  $\check g$ is a field of symmetric two-covariant tensors on $\mcN$ of signature $(0,+,\ldots,+)$, and $\kappa$ is  a field of connections on the bundles of tangents to the integral curves of the characteristic direction of $\check g$.
       \footnote{We will often write $(\check g, \kappa)$ instead of $(\mcN,\check g, \kappa)$, with $\mcN$ being implicitly understood, when no precise description of $\mcN$ is required.}
       %
      The pair  $(\check g, \kappa)$ is further required to satisfy the constraint equation
  \bel{10XII13.2}
    \partial_r\tau - \kappa \tau + |\sigma|^2 + \frac{\tau^2}{n-1} = 0
  \;,
  \ee
  where $\tau$ is the divergence and $\sigma$ is the shear (see Section~\ref{kappa_freedom} for details), which will be referred to as the \emph{Raychaudhuri equation}.

  \item Alternatively, the connection coefficient $\kappa$ and all the components of the space-time metric are prescribed on $\mcN$, subject to the Raychaudhuri constraint equation.
       Here $\mcN$ is viewed as the hypersurface $\{u=0\}$ in the space-time to-be-constructed,
       and thus all metric components $g_{\mu\nu}$ are prescribed at $u=0$ in a coordinate system $(x^\mu)=(u,x^i)$, where $(x^i)$ are local coordinates on $\mcN$.

     \item Finally, schemes where tetrad components of the conformal Weyl tensor are used as
      free data are briefly discussed.
\end{enumerate}

In the first two cases, to obtain a well posed evolution problem one needs to impose gauge conditions; in the third case,
the initial data themselves determine the gauge, with the ``gauge-source functions''  determined from the initial data.

The aim of this work is to analyze the occurrence of log terms in the asymptotic expansions as $r$ goes to infinity for initial data sets as above.
The gauge choice $\kappa=O(r^{-3})$ below (in particular the gauge choice $\kappa=\frac{r}{2}|\sigma|^2$,
on which we focus in part II \cite{TimAsymptotics},
ensures that  affine parameters along the generators of $\mcN$  diverge as $r$ goes to infinity (cf.\ \cite[Appendix~B]{TimAsymptotics}), so that in the associated space-time the limit  $r\to\infty$  will correspond to null geodesics approaching  a (possibly non-smooth) null infinity.

It turns out that the simplest choice of gauge conditions, namely $\kappa=0$ and harmonic coordinates, is \emph{not compatible} with  smooth asymptotics at the conformal boundary at infinity: we prove that the \emph{only} vacuum metric, constructed from characteristic Cauchy data on a light-cone, and which has a smooth conformal completion in this gauge, is Minkowski space-time.

{It should be pointed out, that the observation that some sets of harmonic coordinates are problematic for an analysis of null infinity has already been made in \cite{ChoquetBruhat73,BlanchetPRSL87}. Our contribution here is to make a precise \emph{{no-go}} statement, without approximation procedures or supplementary assumptions.}

One way out of the problem is to replace the harmonic-coordinates condition by a wave-map gauge with non-vanishing gauge-source functions. This provides a useful tool to isolate those log terms which are gauge artifacts, in the sense that they can be removed from the solution by an appropriate choice of the gauge-source functions. There remain, however, some logarithmic coefficients which cannot be removed in this way. We identify those coefficients, and show that the requirement that these coefficients do not vanish is  gauge-independent. In part~II of this work
we show that the logarithmic coefficients are non-zero for generic initial data.
 The equations which lead to vanishing logarithmic coefficients will be referred to as the
\emph{no-logs-condition}.

It is expected that for generic initial data sets, as considered here, the space-times obtained by solving the Cauchy problem will have a polyhomogeneous expansion at null infinity.
There are, however, no  theorems in the existing mathematical literature which guarantee existence of a polyhomogeneous $\scrip$ when the  initial data have non-trivial log terms.

The situation is different when the no-logs-condition is satisfied. In part~II of this work
we show that the resulting initial data lead to smooth initial data for Friedrich's conformal field equations~\cite{F1} as considered in~\cite{CPW}. This implies that the no-logs-condition provides a necessary-and-sufficient condition for the evolved space-time to posses a smooth $\scrip$. For initial data close enough to Minkowskian ones, solutions global to the future are obtained.

It may still be the case that the logarithmic expansions are irrelevant as far as our understanding of gravitational radiation is concerned, either because they never arise from the evolution of isolated physical systems, or because their occurrence prevents existence of a sufficiently long evolution of the data, or because all essential physical issues are already satisfactorily described by smooth conformal completions. While we haven't provided a definite answer to those questions, we hope that our results here will contribute to resolve the issue.

If not explicitly stated otherwise, all manifolds, fields, and expansion coefficients are assumed to be smooth.

\section{The characteristic Cauchy problem on a light-cone}
\label{cauchy_problem}

In this section we will review some facts concerning the characteristic Cauchy problem. Most of the discussion applies to any characteristic surface. We concentrate on a light-cone,
as in this case all the information needed is contained in the characteristic initial data together with the requirement of the smoothness
of the metric at the vertex. The remaining Cauchy problems mentioned in the Introduction will be discussed in Section~\ref{s16XII13.1}  below.

\subsection{Gauge freedom}
\subsubsection{Adapted null coordinates}
\label{Adapted null coordinates}

Our starting point is a $C^{\infty}$-manifold $\mcM \cong\mathbb{R}^{n+1}$ and a future light-cone $C_O\subset \mcM$ emanating from some point $O\in \mcM$.
We make the assumption that the subset $C_O$ can be \textit{globally} represented in suitable coordinates $(y^{\mu})$  by the equation of a Minkowskian cone, i.e.\
\begin{equation*}
 C_O = \{ (y^{\mu}) :  y^0 = \sqrt{\sum_{i=1}^n (y^i)^2} \} \subset \mcM
 \;.
\end{equation*}
Given a $C^{1,1}$-Lorentzian space-time
such a representation is always possible in some neighbourhood of the vertex.
However, since caustics may develop along the null geodesics which generate the cone,
it is a geometric restriction to assume the existence of a Minkowskian representation globally.

A treatment of the characteristic initial value problem at hand is easier in coordinates $x^{\mu}$ adapted to the geometry of the light-cone~\cite{RendallCIVP,CCM2}. We consider space-time-dimensions $n+1\geq 3$.
It is standard to construct a set of coordinates  $(x^{\mu})\equiv(u,r,x^A)$, $A=2,\dots,n$, so that $C_O\setminus\{0\}=\{u = 0\}$.
The $x^A$'s denote local coordinates on the level sets $\Sigma_r:=\{r=\text{const},u=0\}\cong S^{n-1}$, and are constant along the generators. The coordinate $r$ induces, by restriction, a parameterization of the generators and is chosen so that the point $O$ is approached when $r \rightarrow 0$.
The general form of the trace $\overline g$ on the cone $C_O$ of the space-time metric $g$ reduces in these \textit{adapted null coordinates} to
\begin{equation}
 \overline g = \overline g_{00}\mathrm{d}u^2 + 2\nu_0\mathrm{d}u\mathrm{d}r + 2\nu_A\mathrm{d}u\mathrm{d}x^A + \coneg
 \;,
 \label{null2}
\end{equation}
where
\begin{equation*}
 \nu_0:=\overline g_{01}\;, \quad \nu_A:=\overline g_{0A}\;,
\end{equation*}
and where
%
\begin{equation*}
 \coneg = \coneg_{AB}\mathrm{d}x^A\mathrm{d}x^B := \overline g_{AB}\mathrm{d}x^A\mathrm{d}x^B
\end{equation*}
is a degenerate quadratic form induced by $g$ on $C_O$ which induces on each slice $\Sigma_r$ an $r$-dependent Riemannian metric  $\coneg_{\Sigma_r}$ (coinciding with $\coneg (r,\cdot)$ in the coordinates above).%
\footnote{The degenerate quadratic form denoted here by $\coneg$ has been denoted by $\tilde g$ in~\cite{CCM2,ChPaetz}. However, here we will use~$\tilde g$ to denote the conformally rescaled unphysical metric, as done in most of the literature on the subject.}

The components $\overline g_{00}$, $\nu_0$ and $\nu_A$ are gauge-dependent quantities.
In particular, $\nu_0$ changes sign when $u$ is replaced by $-u$. Whenever useful and/or relevant, we will assume that $\partial_r$ is future-directed and $\partial_u$ is past-directed, which corresponds to requiring that $\nu_0>0$.

The quadratic form $\coneg$ is intrinsically defined on $C_O$,
independently of the choice of the parameter $r$ and of how the coordinates are extended off  the cone.

Throughout this work an overline denotes the restriction of space-time objects to $C_O$.


The restriction of the inverse metric to the light-cone takes the form
\begin{equation*}
 {\overline g}^\# \equiv \overline g^{\mu\nu} \partial_\mu\partial_\nu= 2\nu^0\partial_u\partial_r + \overline g^{11} \partial_r\partial_r + 2\overline g^{1A}\partial_r\partial_A + \overline g^{AB}\partial_A\partial_B
\;,
\end{equation*}
where
\begin{equation*}
   \nu^0:=\overline g^{01}=(\nu_0)^{-1}\;, \enspace \nu^A:=\overline g^{AB}\nu_B \;,\enspace
\overline g^{1A} =-\nu^0\nu^A\;,\enspace \overline g^{11}=(\nu^0)^2(\nu^A\nu_A - \overline g_{00})
\;,
\end{equation*}
and where $\overline g^{AB}$ is the inverse of $\overline g_{AB}$.
The coordinate transformation relating the two coordinate systems $(y^{\mu})$ and $(x^{\mu})$ takes the form
\begin{equation*}
 u = \hat r - y^0\;, \quad r = \hat r\;, \quad x^A=\mu^A(y^i/\hat r)\;, \quad \text{with} \quad \hat r := \sqrt{\sum_i (y^i)^2 }
 \;.
\end{equation*}
The inverse transformation reads
\begin{equation*}
 y^0=r-u\;, \quad y^i=r\Theta^i(x^A)\;, \quad \text{with}\quad \sum_i(\Theta^i)^2=1
 \;.
\end{equation*}
Adapted null coordinates are singular at the vertex of the cone $C_O$ and $C^{\infty}$ elsewhere.
They are convenient to analyze the initial data constraints satisfied by the trace $\overline g$ on the light-cone.
Note that the space-time metric $g$ will in general not be of the form (\ref{null2}) away from $C_O$.
We further remark that adapted null coordinates are not uniquely fixed, for there remains the freedom to redefine the coordinate $r$ (the only restriction being that $r$ is strictly increasing on the generators and that $r=0$ at the vertex; compare Section~\ref{kappa_freedom} below),
 and to choose local coordinates on $S^{n-1}$.

\subsubsection{Generalized wave-map gauge}
\label{generalizedwg}

Let us be given an auxiliary Lorentzian metric $\hat g$.
A standard method to establish existence, and well-posedness, results for Einstein's vacuum field equations $R_{\mu\nu}=0$ is  a ``hyperbolic reduction'' where the Ricci tensor is replaced
by the \textit{reduced Ricci tensor  in $\hat g$-wave-map gauge},
\begin{equation}
 \label{17V14.1}
 R^{(H)}_{\mu\nu} := R_{\mu\nu} - g_{\sigma(\mu}\hat\nabla_{\nu)}H^{\sigma}
 \;.
\end{equation}
Here
\begin{equation}
 H^{\lambda} :=\Gamma^{\lambda}-\hat \Gamma^{\lambda} - W^{\lambda}
\;, \quad
\Gamma^{\lambda}:= g^{\alpha\beta}\Gamma^{\lambda}_{\alpha\beta}\;, \quad
\hat \Gamma^{\lambda}:= g^{\alpha\beta}\hat \Gamma^{\lambda}_{\alpha\beta}
\;.
\end{equation}
We use the hat symbol ``$\,\hat\enspace\,$" to indicate quantities associated with the \textit{target metric $\hat g$},
while
$W^{\lambda}=W^{\lambda}(x^{\mu},g_{\mu\nu})$  denotes a  vector field  which  is allowed to depend upon the coordinates
and the metric $g$,  but not upon derivatives of $g$.

The \textit{wave-gauge vector $H^{\lambda}$} has been chosen of the above form~\cite{FriedrichCMP,Friedrich:hyperbolicreview,CCM2} to remove  some second-derivatives terms in the Ricci tensor, so that the \emph{reduced vacuum
Einstein equations}
\bel{17V14.3}
 R^{(H)}_{\mu\nu}=0
 \ee
 form a system of quasi-linear wave equations for $g$.

 Any solution of \eq{17V14.3} will provide a solution of the vacuum Einstein equations provided that the so-called \textit{$\hat g$-generalized wave-map gauge condition}
 \bel{17V14.2}
 H^\lambda=0
 \ee
is satisfied. In the context of the characteristic initial value problem, the ``gauge condition'' \eq{17V14.2} is satisfied by solutions of the reduced Einstein equations  if it is  satisfied on the initial characteristic hypersurfaces.

The vector field $W^\lambda$ reflects the freedom to choose coordinates off the cone. Its components can be freely specified, or chosen to satisfy ad hoc equations.
Indeed, by a suitable choice of coordinates the gauge source functions $W^{\lambda}$ can locally be given any preassigned form, and conversely the $W^{\lambda}$'s can be used to determine coordinates by solving wave equations, given appropriate initial data on the cone.

In most of this work we will use a Minkowski target in adapted null coordinates, that is
\begin{equation}
 \hat g  = \eta \equiv -\mathrm{d}u^2 + 2\mathrm{d}u\mathrm{d}r + r^2s_{AB}\mathrm{d}x^A\mathrm{d}x^B
 \;,
 \label{Minktarget}
\end{equation}
where $s$ is the  unit round metric on the sphere $S^{n-1}$.

\subsection{The first constraint equation}
 \label{kappa_freedom}

Set  $\ell\equiv \ell^{\mu}\partial_
 \mu\equiv \partial_r $.
The Raychaudhuri equation $\overline R_{\mu\nu}\ell^{\mu}\ell^{\nu}\equiv \overline R_{11}=0$ provides a constraining relation between  the connection coefficient $\kappa$ and other geometric objects on $C_O$, as follows: Recall that the \textit{null second fundamental form} of $C_O$ is defined as
\begin{equation*}
\chi_{ij} \,:=\, \frac{1}{2} (\mathcal{L}_{\ell} \coneg)_{ij}
 \;,
\end{equation*}
where $\mathcal{L}$ denotes the Lie derivative. In the adapted coordinates described above we have
\begin{equation*}
\chi_{AB} = -\overline\Gamma{}^0_{AB}\nu_0  = \frac{1}{2}\partial_r\overline g_{AB}
\;, \quad
 \chi_{11}\,=\,0\;, \quad \chi_{1A}\,=\,0
 \;.
\end{equation*}
The null second fundamental form is sometimes called \emph{null extrinsic curvature} of the initial surface $C_O$, which is misleading since
only objects intrinsic to $C_O$ are involved in its definition.

The \textit{mean null extrinsic curvature} of  $C_O$, or the \textit{divergence} of  $C_O$, which we denote by $\tau$
and which is often denoted by $\theta$ in the literature, is defined as the trace of $\chi$:
\begin{equation}
 \tau:= \chi_A^{\phantom{A}A}\equiv \overline g^{AB}\chi_{AB}\equiv \frac{1}{2}\overline g^{AB}\partial_r\overline g_{AB} \equiv \partial_r{\log\sqrt{\det\coneg_{\Sigma_r}}}
 \;.
 \label{definition_tau}
\end{equation}
It measures the rate of change of area along the null geodesic generators of $C_O$.
The traceless part of $\chi$,
\begin{eqnarray}
 \sigma_A^{\phantom{A}B} &:=& \chi_A^{\phantom{A}B} - \frac{1}{n-1}\delta_A^{\phantom{A}B}\tau \,\equiv\, \overline g^{BC}\chi_{AC} - \frac{1}{n-1}\delta_A^{\phantom{A}B}\tau
 \label{definition_sigmaAB}
\\
 &=& \frac{1}{2}\gamma^{BC}(\partial_r\gamma_{AC})\breve{}
 \;,
 \label{formula_sigmaAB}
\end{eqnarray}
is known as the \textit{shear} of $C_O$.
In (\ref{formula_sigmaAB}) the field $\gamma$ is any representative of the conformal class of $\check g_{\Sigma_r}$, which is sometimes regarded as the free initial data.
The addition of the ``$\breve{~~} $''-symbol to a tensor $w_{AB}$ denotes ``the  trace-free part of'':
\begin{equation}
 \label{12XII13.1}
\breve{w}_{AB} :=w_{AB} - \frac{1}{n-1} \gamma_{AB}\gamma^{CD}w_{CD}
\;.
\end{equation}
We set
\begin{eqnarray}
 |\sigma|^2 &:=& \sigma_A^{\phantom{A}B}\sigma_B^{\phantom{B}A} = - \frac{1}{4}(\partial_r\gamma^{AB})\breve{}\,(\partial_r\gamma_{AB})\breve{}
 \label{definition_sigma}
 \;.
\end{eqnarray}
We thus observe that the shear $\sigma_A{}^B$ depends merely on the conformal class of $\check g_{\Sigma_r}$.
This is not true for $\tau$, which is instead in one-to-one correspondence with the conformal factor relating $\check g_{\Sigma_r}$ and $\gamma$.

Imposing the generalized wave-map gauge condition $H^{\lambda}=0$, the wave-gauge constraint equation  induced by $\ol R_{11}=0$
reads \cite[equation (6.13)]{CCM2},
\begin{equation}
 \partial_r\tau - \underbrace{\Big( \nu^0\partial_r\nu_0 - \frac{1}{2}\nu_0(\overline W{}^0  + \ol {\hat\Gamma}^0) - \frac{1}{2}\tau \Big)}_{=:\kappa}\tau + |\sigma|^2 + \frac{\tau^2}{n-1} = 0
 \;.
\label{R11_constraint}
\end{equation}
%

Under the allowed  changes of the coordinate $r$, $r\mapsto \overline r(r,x^A)$, with $\partial \overline r/\partial r>0$, $\overline r(0,x^A)=0$,
the tensor field $g_{AB}$ transforms as a scalar,
\bel{17V14.7}
  \overline g_{AB}(\overline r, x^C)
   =
    g_{AB}(r(\overline r, x^C),x^C)
    \;,
\ee
the field $\kappa$ changes as a connection coefficient
\bel{17V14.5}
 \bar \kappa =  \frac{\partial r}{\partial\overline r}  \kappa +  \frac{\partial \overline r}{\partial r}  \frac{\partial^2 r}{\partial \overline r ^2}
 \;,
\ee
while $\tau$ and $\sigma_{AB}$ transform as one-forms:
\bel{17V14.6}
 \overline \tau = \frac {\partial r}{\partial \overline r} \tau\;,
 \quad
 \overline \sigma_{AB} = \frac {\partial r}{\partial \overline r}  \sigma_{AB}
 \;.
\ee

The freedom to choose $\kappa$ is thus directly related to the freedom to reparameterize the generators of $C_O$.
 Geometrically, $\kappa$ describes the acceleration of the integral curves of $\ell$, as seen from the identity $\nabla_{\ell}\ell^{\mu}=\kappa\ell^{\mu}$.
The choice $\kappa=0$ corresponds to the requirement that the coordinate $r$ be an affine parameter along the rays.
For a given $\kappa$ the first constraint equation splits into an equation for $\tau$ and, once this has been solved, an equation for $\nu_0$.

 Once a parameterization of generators has been chosen,
we see that the metric function $\nu_0$ is largely determined by the choice of the gauge-source function  $\overline W{}^0$ and, in fact,
the remaining gauge-freedom in $\nu_0$ can be encoded in $\overline W{}^0$.

\subsection{The wave-map gauge characteristic constraint equations}

Here we present the whole hierarchical ODE-system
of  Einstein wave-map gauge constraints induced by the vacuum Einstein equations in a generalized wave-map gauge (cf.~\cite{CCM2} for details)
for given initial data   $([\gamma],\kappa)$ and gauge source-functions $\overline W^\lambda$.

The equation \eq{R11_constraint} induced by $\ol R_{11}=0$ leads to the equations
\begin{eqnarray}
 \partial_r\tau - \kappa \tau + |\sigma|^2 + \frac{\tau^2}{n-1} &=& 0
 \;,
 \label{constraint_tau}
\\
 \partial_r\nu^0 + \frac{1}{2}(\overline W{}^0+  \ol {\hat\Gamma}^0) + \nu^0(\frac{1}{2}\tau + \kappa ) &=& 0
 \;.
 \label{constraint_nu0}
\end{eqnarray}
Equation \eq{constraint_tau} is a Riccati differential equation for $\tau$ along each null ray, for $\kappa=0$ it reduces to the standard form of the Raychaudhuri equation.
Equation \eq{constraint_nu0} is expressed in terms of %
$$
 \nu^0:=\frac 1 {\nu_0}
$$
rather than of $\nu_0$, as then it becomes linear.
Our aim is  to analyze the asymptotic behavior of solutions of the constraints, for this
it turns out to be convenient to introduce an auxiliary positive function $\varphi$, defined as
\begin{equation}
 \tau =(n-1)\partial_r\log\varphi
 \;,
\label{relation_tau_phi}
\end{equation}
which transforms \eq{constraint_tau} into a second-order \textit{linear} ODE,
\begin{equation}
 \partial^2_{r}\varphi -\kappa\partial_r\varphi + \frac{|\sigma|^2}{n-1}\varphi =0
 \;.
 \label{constraint_phi}
\end{equation}
The function $\varphi$ is essentially a rewriting of the conformal factor $\Omega$ relating $\coneg $
 and the initial data $\gamma$,
$\overline g_{AB} = \Omega^2 \gamma_{AB}$:
\begin{equation}
 \Omega = \varphi \left( \frac{\det s}{\det \gamma}\right)^{1/(2n-2)}
 \;.
 \label{definition_Omega}
\end{equation}
%
Here $s=s_{AB}\mathrm{d}x^A\mathrm{d}x^B$ denotes the standard metric on $S^{n-1}$. The initial data symmetric tensor field $\gamma=\gamma_{AB}dx^A dx^B$ is assumed to form a
one-parameter family of Riemannian metrics $r\mapsto \gamma(r,x^A)$ on $S^{n-1}$.

The boundary conditions  at the vertex $O$ of the cone for the ODEs occurring in this work follow from the requirement of regularity of the metric there.
When imposed, they guarantee that (\ref{constraint_nu0}) and (\ref{constraint_phi}), as well as all the remaining constraint equations below, have  unique solutions. The relevant conditions at
the vertex have been computed in regular coordinates and then translated into adapted null coordinates in~\cite{CCM2} for a  natural family of gauges.

For $\nu^0$ and $\varphi$ the boundary conditions read
\begin{eqnarray*}
 \begin{cases}
  \lim_{r\rightarrow 0}\nu^0 = 1
\;,
\\
 \lim_{r\rightarrow 0}\varphi=0\;,\quad \lim_{r\rightarrow 0}\partial_r\varphi=1
 \;.
 \end{cases}
\end{eqnarray*}

The Einstein equations $\overline R_{1A} = 0$ imply the equations
\cite[Equation (9.2)]{CCM2} (compare~\cite[Equation~(3.12)]{ChPaetz})
\begin{eqnarray}
  \frac{1}{2}(\partial_r + \tau)\xi_A  - \conenabla_B \sigma_A^{\phantom{A}B} + \frac{n-2}{n-1}\partial_A\tau +\partial_A \kappa
=0
 \;,
 \label{eqn_nuA_general}
\end{eqnarray}
where $\conenabla$ denotes the Riemannian connection defined by  $\coneg_{\Sigma_r}$,
and
$$
 \xi_A:=-2\ol \Gamma^1_{1A}
 \;.
$$
When  $\ol H^0=0 $  one has $\ol H^A=0$ if and only if
\begin{eqnarray}
 \xi_A
&= & -2\nu^0\partial_r\nu_A + 4\nu^0\nu_B\chi_A{}^B + \nu_A(\overline W{}^0+ \ol {\hat\Gamma}^0) + \overline g_{AB}( \overline W{}^B
+ \ol {\hat\Gamma}^B)
 \nonumber
\\
 &&- \gamma_{AB} \gamma^{CD} \coneGamma{}^B_{CD}
  \;.
 \label{eqn_xiA}
\end{eqnarray}
Here $\coneGamma^B_{CD}$ are the Christoffel symbols associated to the  metric $\coneg_{\Sigma_r}$.

Given fields $\kappa$ and $\overline g_{AB}=g_{AB}|_{u=0}$ satisfying the Raychaudhuri constraint equation,
 the equations (\ref{eqn_nuA_general}) and \eq{eqn_xiA} can be read as hierarchical linear first-order PDE-system which successively determines
$\xi_A$ and $\nu_A$ by solving ODEs. The boundary conditions at the vertex are
\begin{equation*}
 \lim_{r\rightarrow 0}\nu_A = 0=\lim_{r\rightarrow 0}\xi_A
 \;.
\end{equation*}

The remaining constraint equation follows from the Einstein equation $\overline g^{AB} \overline R_{AB} = 0$
\cite[Equations (10.33) \& (10.36)]{CCM2},
\begin{eqnarray}
 (\partial_r + \tau + \kappa)\zeta +  \coneR - \frac{1}{2}\xi_A\xi^A +\conenabla_A\xi^A =0
 \;,
 \label{zeta_constraint}
\end{eqnarray}
where we have set $\xi^A:= \ol g^{AB}\xi_B$.
The function $\coneR$ is the curvature scalar associated to $\coneg_{\Sigma_r}$.
The auxiliary function $\zeta$ is defined as
\begin{equation}
 \zeta:= (2\partial_r + \tau + 2\kappa)\overline g^{11} + 2\overline W{}^1 + 2 \ol {\hat\Gamma}^1
 \;,
 \label{dfn_zeta}
\end{equation}
and satisfies, if $\ol H^{\lambda}=0$, the relation $\zeta=2\ol g^{AB}\ol\Gamma^1_{AB} + \tau \ol g^{11}$.
The term $\ol{{\hat\Gamma}}{}^1$ depends upon the target metric chosen, and with our current Minkowski target $\hat g=\eta$ we have
\begin{equation}
 \label{5XII13.1}
 \ol {\hat\Gamma}^1 = \ol {\hat\Gamma}^0=  - r\overline g^{AB}s_{AB}
 \;.
\end{equation}
Taking the relation
\begin{equation}
 \overline g^{11} = (\nu^0)^2(\nu^A\nu_A - \overline g_{00})
\end{equation}
into account, the definition \eq{dfn_zeta} of $\zeta$ becomes an equation for $\overline g_{00}$ once $\zeta$ has been determined.
The boundary conditions for \eq{zeta_constraint} and \eq{dfn_zeta} are
\begin{equation*}
 \lim_{r\rightarrow 0}\overline g^{11} =1\;,\quad \lim_{r\rightarrow 0}(\zeta+2 r^{-1}) =0
\;.
\end{equation*}

\subsection{Global solutions}
 \label{s6XII13.1}

A prerequisite for obtaining asymptotic expansions is existence of solutions of the constraint equations defined for all $r$.
The question of globally defined data becomes trivial when all  metric components are prescribed on $C_O$: Then the only condition is that $\tau$, as calculated from $\ol g_{AB}$, is strictly positive. Now, as is well-known, and will be rederived shortly in any case, negativity of $\tau$ implies formation of conjugate points in finite affine time, or geodesic incompleteness of the generators. In this work we will only be interested in light-cones  $C_O$ which are globally smooth (except, of course, at the vertex), and extending all the way to conformal infinity. Such cones have complete generators without conjugate points,
and so $\tau$ must remain positive. But then one can solve algebraically the Raychaudhuri equation to globally determine $\kappa$.

We note that the function $\tau$ depends upon the choice of parameterisation of the generators, but its sign does not, hence the above discussion applies regardless of that choice.
Recall that we assume that the tip of the cone corresponds to $r \rightarrow 0$ and that the condition that $\kappa=O(r^{-3})$ ensures that an affine parameter along the generators
tends to infinity for $r\rightarrow \infty$, so that the parameterization of $r$ covers the whole cone from $O$ to null infinity.

In some situations it might be convenient
to request that $\kappa$ vanishes, or takes some prescribed value. In this case the Raychaudhuri equation becomes an equation for the function $\varphi$, and the question of its global positivity arises.

Recall that the initial conditions for $\varphi$ at the vertex are $\varphi(0)=0$ and $\partial_r\varphi(0)=1$, and so both $\partial_r \varphi$ and $\varphi$ are  positive near zero.
Now,  (\ref{constraint_phi}) with $\kappa=0$ shows that $\varphi$ is concave
as long as it is non-negative; equivalently, $\partial_r \varphi$ is non-increasing in the region where $\varphi>0$.
An immediate consequence of this is that if $\partial_r\varphi$ becomes negative at some $r_0>0$, then it stays so, with $\varphi$ vanishing for some $r_0<r_1<\infty$, i.e.\ after some finite affine parameter time. We recover the result just mentioned, that negativity of $\partial_r\varphi$ indicates incompleteness, or occurrence of conjugate points, or both. In the first case the solution will not be defined for all affine parameters $r$, in the second $C_O$ will fail to be smooth for $r>r_1$ by standard results on conjugate points. Since the sign of $\partial_r\varphi$ is invariant under orientation-preserving reparameterisations, we conclude that:

\begin{Proposition}
 \label{P12XII13.1}
Globally smooth and null-geodesically-complete light-cones must have $\partial_r\varphi$  positive.
\end{Proposition}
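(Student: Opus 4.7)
The plan is to argue by contraposition and reduce to the affine gauge. First I would note that the sign of $\partial_r\varphi$ is a geometric invariant of the initial data: since $\partial_r\varphi = \tau\varphi/(n-1)$ with $\varphi>0$ by construction of the conformal factor in \eqref{definition_Omega}, the sign of $\partial_r\varphi$ agrees with that of $\tau$, and by \eqref{17V14.6} an orientation-preserving reparameterization $r\mapsto\overline r$ has $\partial r/\partial\overline r>0$, so $\sgn\overline\tau=\sgn\tau$. It therefore suffices to establish the claim in the affine parameterization $\kappa=0$, where \eqref{constraint_phi} reduces to the linear oscillator-type equation
\begin{equation*}
\partial_r^2\varphi=-\frac{|\sigma|^2}{n-1}\,\varphi\;,\qquad \varphi(0)=0,\ \partial_r\varphi(0)=1\;.
\end{equation*}

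Next I would argue as in the paragraph preceding the Proposition, but promoted to a contradiction. Suppose $\partial_r\varphi(r_0)\le 0$ for some $r_0>0$. Since $\partial_r\varphi(0)=1$, by continuity there is a smallest $r_*\in(0,r_0]$ with $\partial_r\varphi(r_*)=0$; monotonicity on $[0,r_*]$ gives $\varphi(r_*)>0$. As long as $\varphi\ge 0$ the ODE yields $\partial_r^2\varphi\le 0$, so $\partial_r\varphi$ is non-increasing past $r_*$. If the shear is not identically zero on $[r_*,\infty)$, then at some $r_{**}>r_*$ one has $\partial_r\varphi(r_{**})<0$; but then $\partial_r\varphi(r)\le\partial_r\varphi(r_{**})<0$ for all $r\ge r_{**}$ with $\varphi(r)\ge 0$, which forces $\varphi$ to reach zero at a finite affine parameter $r_1<\infty$. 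Using \eqref{definition_Omega} the cross-sectional area element on $\Sigma_r$ is $\sqrt{\det\overline g_{AB}}=\varphi^{\,n-1}\sqrt{\det s}$, so its vanishing at $r_1$ is a caustic/conjugate point on $C_O$, violating either smoothness of the cone or completeness of its generators — contradicting the hypotheses.

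The one subtle point, which I expect to be the main obstacle, is the borderline case where $\partial_r\varphi(r_*)=0$ and $|\sigma|^2$ vanishes identically on $[r_*,\infty)$: here $\varphi$ is merely constant past $r_*$, no finite caustic forms, and the geometric obstruction invoked above does not apply. This case has to be excluded either by interpreting ``positive'' in the statement as ``non-negative'' (which is what the preceding argument actually delivers), or by appealing to the requirement that $C_O$ extend all the way to a non-trivial null infinity, which is incompatible with a cylindrical tail on which all cross-sections have a fixed, bounded area. In the latter reading the argument closes, because then $\partial_r\varphi=0$ somewhere would still force $\varphi$ to be eventually non-increasing and bounded, obstructing the required asymptotic growth $\varphi\to\infty$ expected at $\scrip$ in the parameterizations considered in the paper.
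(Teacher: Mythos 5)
Your proposal follows essentially the same route as the paper: reduce to the affine gauge using the reparameterization-invariance of the sign of $\partial_r\varphi$, use the concavity of $\varphi$ forced by \eqref{constraint_phi} with $\kappa=0$ to show that once $\partial_r\varphi$ turns negative it stays negative and drives $\varphi$ to zero in finite affine time, and identify that vanishing with a conjugate point (loss of smoothness) or incompleteness of the generators. Your explicit handling of the borderline case $\partial_r\varphi(r_*)=0$ with $\sigma\equiv 0$ beyond $r_*$ — which the concavity argument alone does not exclude — is in fact more careful than the paper's own proof, which only rules out strict negativity and passes over this cylindrical degeneration in silence.
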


A  set of conditions  guaranteeing global existence of positive solutions of the Raychaudhuri equation, viewed as an equation for $\varphi$, has been given in~\cite[Theorem~7.3]{CCM2}. Here we shall give an alternative simpler criterion, as follows:

 Suppose, first, that $\kappa=0$. Integration of \eq{constraint_phi} gives
%
\begin{eqnarray}
  \partial_r\varphi(r,x^A)  &=&1  -\frac{1}{n-1} \int_0^r \big(\varphi|\sigma|^2\big)(\tilde r,x^A)\,\mathrm{d}\tilde r \,\leq\,  1
   \label{6XII13.3}
\end{eqnarray}
as long as $\varphi$ remains positive. Since $\varphi(0)=0$, we see that we always have %
$$
 \varphi(r,x^A)\le r
$$
in the region where $\varphi$ is positive, and in that region it holds
\begin{eqnarray*}
  \partial_r\varphi(r,x^A)   &\ge  & 1   -\frac{1}{n-1} \int_0^r \tilde r \,|\sigma (\tilde r,x^A)|^2\mathrm{d}\tilde r
\\
 &\geq & 1 -\frac{1}{n-1} \int_0^{\infty} \tilde r\, |\sigma (\tilde r,x^A)|^2\mathrm{d} \tilde r
\;.
\end{eqnarray*}
This implies that $\varphi$ is strictly increasing if
\begin{equation}
 \int_0^{\infty} r|\sigma|^2 \,\mathrm{d} r \,<\, n-1
 \;.
\label{second_integral}
\end{equation}
Since $\varphi$ is positive for small $r$ it remains positive as long as $\partial_r \varphi$ remains positive,
and so global
positivity of $\varphi$   is guaranteed  whenever \eq{second_integral} holds.

A rather similar analysis applies to the case $\kappa\ne 0$, in which we set
\bel{30IV12.1}
 H(r,x^A) := \int_0 ^r  {\kappa(\tilde r,x^A)}   \mathrm{d}\tilde r
 \;.
\ee
Let
\begin{eqnarray}
  \varphi(r) = \mathring \varphi (s(r))\;, \quad \text{where} \quad s(r):= \int_0^re^{H(\hat r)}\mathrm{d}\hat r
\;,
\label{dfn_mathring_varphi}
\end{eqnarray}
the $x^A$-dependence being implicit.
The function $s(r)$ is   strictly increasing with $s(0)=0$. If we assume that $\kappa$ is continuous in $r$ with $\kappa(0)=0$, defined for all $r$ and, e.g.,
\bel{9XII13.1}
 \int_0^\infty\kappa>-\infty
  \;,
\ee
then $\lim_{r\rightarrow \infty} s(r)=+\infty$,
and thus the function $r\mapsto s(r)$ defines a differentiable bijection from $\mathbb{R}^+$ to itself.
Consequently, a differentiable inverse function $s\mapsto r(s)$ exists, and is smooth if $\kappa$ is.

Expressed in terms of \eq{dfn_mathring_varphi}, \eq{constraint_phi} becomes
\begin{equation}
\partial^2_s\mathring\varphi( s)
 +e^{-2H(r(s))}  \frac{|\sigma|^2(r(s))}{n-1}\mathring\varphi(s) =0
 \;.
\label{constraint_phi_alter}
\end{equation}
A global solution $\varphi>0$ of  \eq{constraint_phi} exists if and only if a global solution  $\mathring\varphi>0$ of
\eq{constraint_phi_alter} exists.
It follows from the considerations above
(note that $\mathring\varphi(s=0)=0$ and $\partial_s\mathring\varphi(s=0)=1$)
that a sufficient condition for global existence of positive solutions of \eq{constraint_phi_alter} is
\begin{eqnarray}
 \lefteqn{\int_0^{\infty}s e^{-2H(r(s))}   |\sigma|^2(r(s)) \,\mathrm{d} s < n-1
 }
 &&
\nonumber
\\
&&
\Longleftrightarrow  \int_0^{\infty}\Big( \int_0^re^{H(\hat r)}\mathrm{d}\hat r \Big) e^{-H(r)}   |\sigma|^2(r) \,\mathrm{d} r < n-1
 \;.
\label{second_integral_alter}
\end{eqnarray}

Consider now the question of positivity of $\nu^0$.
In the $\kappa=0$-wave-map gauge with Minkowski metric as a target
we have (see~\cite[Equation~(4.7)]{ChConeExistence})
\begin{equation}
  \nu^{0}(r,x^A)
  = \frac{\varphi^{-(n-1)/2}(r,x^A)}{2}\int_0^r \Big(\hat r\varphi^{(n-1)/2}\overline{g}{}^{AB}s_{AB}\Big)(\hat r,x^A)\, \mathrm{d}\hat r
  \;.
 \label{11II.1}
\end{equation}
In an $s$-orthonormal coframe  $\theta^{(A)}$,  $\overline{g}{}^{AB}s_{AB}$ is the sum of the diagonal elements  $\overline{g}{}^{(A)(A)}=\ol g^\sharp(\theta^{(A)},\theta^{(A)})$, $A=1,\ldots,n-1$, where $\ol g^\sharp$ the  scalar product on $T^*\Sigma_r$ associated to $\ol g_{AB}\mathrm{d}x^A \mathrm{d}x^B$,   each of which is positive in Riemannian signature. Hence
$$
 \overline{g}{}^{AB}s_{AB}>0
 \;.
$$
So, for globally positive  $\varphi$ we obtain a globally defined strictly positive $\nu^0$, hence also a globally defined strictly positive $\nu_0\equiv 1/\nu^0$.

When $\kappa\ne 0$,
and
allowing further a non-vanishing $W^{0}$, we find instead
\bean
  \nu^{0}(r,x^A)
  & = &  \frac{\left(e^{-H  }\varphi^{-(n-1)/2}\right)(r,x^A)}{2}
   \times
\\
 &&   \int_0^r
   \Big( e^{H }\varphi^{(n-1)/2}(\hat r\overline{g}{}^{AB}s_{AB} - \ol W{}^0)
\Big)(\hat r,x^A)\, \mathrm{d}\hat r
  \;,
 \label{11II.4v2}
\eea
with $H$ as in \eq{30IV12.1}.
If $\ol W{}^0=0$ we obtain positivity as before.
More generally, we see that a necessary-and-sufficient condition for positivity of $\nu^0$ is positivity of the integral in the last line of \eq{11II.4v2} for all $r$. This will certainly be the case if  the gauge-source function $\ol W{}^0$ satisfies
\begin{equation}
\ol W{}^0<r\overline{g}{}^{AB}s_{AB}= r\varphi^{-2} \Big( \frac{\det\gamma}{\det s} \Big)^{1/(n-1)}\gamma^{AB} s_{AB}
\;.
\label{cond_non-vanish_W0}
\end{equation}

Summarising we have proved:

\begin{Proposition}
 \label{P6XII13.1}
\begin{enumerate}
 \item Solutions of the Raychaudhuri equation with prescribed $\kappa$ and $\sigma$ are global  when \eq{9XII13.1} and \eq{second_integral_alter} hold, and lead to  globally positive functions $\varphi$ and $\tau$.
 \item Any global solution of the Raychaudhuri equation with $\varphi>0$ leads to a globally defined positive  function $\nu_0$ when the gauge source function $\ol W^{0}$ satisfies \eq{cond_non-vanish_W0}. This condition will be satisfied for any $\ol W^{0}\le 0$.
\end{enumerate}
\end{Proposition}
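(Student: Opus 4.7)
The plan is essentially to organize the step-by-step analysis done in the paragraphs immediately preceding the proposition into a single coherent argument. For part (1), the strategy is to reduce the general $\kappa$ case to an effective $\kappa=0$ situation via the substitution $\varphi(r)=\mathring\varphi(s(r))$ with $s(r)=\int_0^r e^{H(\hat r)}\mathrm{d}\hat r$. First I would verify that \eq{9XII13.1} makes $r\mapsto s(r)$ a smooth bijection of $\mathbb{R}^+$ onto itself by checking that $s(r)\to\infty$. In the $s$-variable \eq{constraint_phi} becomes \eq{constraint_phi_alter}, with no first-derivative term, and initial conditions $\mathring\varphi(0)=0$, $\partial_s\mathring\varphi(0)=1$. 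A single integration gives
\[
 \partial_s\mathring\varphi(s)=1-\frac{1}{n-1}\int_0^s e^{-2H(r(\tilde s))}|\sigma|^2(r(\tilde s))\,\mathring\varphi(\tilde s)\,\mathrm{d}\tilde s,
\]
so $\partial_s\mathring\varphi$ is non-increasing while $\mathring\varphi\geq 0$. Hence $\mathring\varphi(\tilde s)\leq \tilde s$ in that region, and one obtains the uniform lower bound $\partial_s\mathring\varphi(s)\geq 1-\frac{1}{n-1}\int_0^\infty \tilde s\,e^{-2H(r(\tilde s))}|\sigma|^2(r(\tilde s))\,\mathrm{d}\tilde s$. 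Under hypothesis \eq{second_integral_alter} the right-hand side is strictly positive, so $\partial_s\mathring\varphi>0$, hence $\mathring\varphi>0$, for all $s>0$.

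Translating back via $\partial_r\varphi=e^{H}\partial_s\mathring\varphi$ gives $\varphi>0$ and $\partial_r\varphi>0$ globally, and then $\tau=(n-1)\partial_r\log\varphi>0$ follows at once. A minor technical point that I would verify explicitly is the equivalence of the two formulations in \eq{second_integral_alter}: the substitution $\mathrm{d}s=e^{H(r)}\mathrm{d}r$ turns $\int_0^\infty s\,e^{-2H(r(s))}|\sigma|^2(r(s))\,\mathrm{d}s$ into $\int_0^\infty\bigl(\int_0^r e^{H(\hat r)}\mathrm{d}\hat r\bigr)e^{-H(r)}|\sigma|^2(r)\,\mathrm{d}r$.

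For part (2), the plan is to use the explicit representation formula \eq{11II.4v2} for $\nu^0$. Global positivity of $\varphi$ makes the prefactor $\frac{1}{2}e^{-H}\varphi^{-(n-1)/2}$ positive, and $e^H\varphi^{(n-1)/2}$ is positive in the integrand, so the only thing to check is positivity of $\hat r\,\overline g^{AB}s_{AB}-\overline W{}^0$. As observed in the excerpt, $\overline g^{AB}s_{AB}$ is the sum of the diagonal entries of $\overline g^{\sharp}$ in an $s$-orthonormal coframe, hence strictly positive by Riemannian signature, and $\hat r>0$. Therefore condition \eq{cond_non-vanish_W0}, in particular $\overline W{}^0\leq 0$, guarantees strict positivity of the integrand, hence of the integral, hence of $\nu^0$, hence finally of $\nu_0=1/\nu^0$.

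There is no serious obstacle here: the proposition is a clean summary of the analytic observations made in the preceding subsection, and the only care required is in (a) handling the change of variable $r\leftrightarrow s$ consistently, including checking that $\kappa(0)=0$ together with \eq{9XII13.1} really is enough to ensure a smooth global bijection on $\mathbb{R}^+$, and (b) being explicit about the inequality $\mathring\varphi(\tilde s)\leq\tilde s$ used in the Grönwall-type bound for $\partial_s\mathring\varphi$, which is the mechanism propagating positivity from an infinitesimal neighborhood of the vertex to all of $\mathbb{R}^+$.
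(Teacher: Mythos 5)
Your proposal is correct and follows essentially the same route as the paper: the paper likewise establishes part (1) by first treating $\kappa=0$ via one integration of \eq{constraint_phi} (yielding $\varphi\le r$ and the lower bound on $\partial_r\varphi$ under \eq{second_integral}), then reduces general $\kappa$ to this case through the substitution $s(r)=\int_0^r e^{H}$, and proves part (2) from the explicit formula \eq{11II.4v2} together with the positivity of $\overline g{}^{AB}s_{AB}$. The only cosmetic difference is that you work directly in the $s$-variable throughout, whereas the paper presents the $\kappa=0$ computation first and then transfers it.
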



\subsection{Positivity of $\varphi_{-1}$ and $(\nu^0)_0$}
\label{nonvanishing_subsection}

For reasons that will become clear in Section~\ref{s12XII13.2}, we are interested in fields $\varphi$ and $\nu_0$ which, for large $r$, take the form
\bel{13XII13.11}
 \varphi(r,x^A)  = {\varphi_{-1}}(x^A)r 
 +o(r)\;,
 \quad
 \nu^0(r,x^A)  = ( {\nu^0})_0(x^A)  +o(1)\;,
\ee
with $ {\varphi_{-1}}$ and   $ ( {\nu^0})_0$ positive. The object of this section is to provide conditions which guarantee existence of such expansions, assuming a global positive solution $\varphi$.

Let us further assume
that $ e^{-2H }\varphi|\sigma|^2$ is continuous in $r$ with
$$\int_0^\infty  \big(e^{-2H }\varphi|\sigma|^2\big)\big|_{r=r(s)} \mathrm{d}s
=\int_0^\infty  e^{- H }\varphi|\sigma|^2 \mathrm{d}r
<\infty
 \;.
$$
Integration of \eq{constraint_phi_alter} and de l'Hospital rule at infinity give
\begin{eqnarray}
  \mathring \varphi_{-1}:=\lim_{s\to\infty}  \frac{\mathring \varphi(s)}{s} =\lim_{s\to\infty} \partial_s \mathring \varphi(s)  =1  -\frac{1}{n-1} \int_0^\infty e^{- H }\varphi|\sigma|^2\mathrm{d} r
   \label{6XII13.4}
 \;.
\end{eqnarray}
This will be strictly positive if e.g.\ \eq{second_integral_alter} holds, as
\begin{eqnarray*}
 \int_0^r e^{H(\tilde r)}\mathrm{d}\tilde r -\varphi(r) &=&  \int_0^r (e^{H(\tilde r)}-\partial_{\tilde r} \varphi(\tilde r)) \mathrm{d}\tilde r
\\
&=&  \int_0^{r(s)} \underbrace{(1-\partial_{\tilde s}\mathring \varphi(\tilde s)) }_{\geq 0 \text{ by } \eq{6XII13.3}}\mathrm{d} \tilde s \,\geq\, 0
\;,
\end{eqnarray*}
and thus by \eq{6XII13.4} and \eq{second_integral_alter}
\begin{eqnarray*}
\mathring\varphi_{-1} \,\geq\, 1- \frac{1}{n-1}  \int_0^{\infty}\Big( \int_0^re^{H(\hat r)}\mathrm{d}\hat r \Big) e^{-H(r)}   |\sigma|^2(r) \,\mathrm{d} r\,>\, 0
\;.
\end{eqnarray*}

One can now use \eq{6XII13.4} to obtain  \eq{13XII13.11} if we assume that the integral of $\kappa$
over $r$ converges: 
\bel{6XII12.6}
\forall x^A \qquad-\infty < \beta(x^A):=\int_0^{\infty} \kappa(r,x^A)\mathrm{d}r <\infty
\;,
\ee
so that
\be
 \int_0^r \kappa(s,\cdot) \mathrm{d}s = \beta(\cdot) + o (1) 
 \;.
\ee
Indeed, it follows from \eq{6XII12.6} that there exists  a constant $C$ such that the parameter $s$ defined in \eq{dfn_mathring_varphi} satisfies
\bel{12XII13.2}
 C^{-1}   \le \frac{\partial s}{\partial r} \le C
 \;,
 \quad
 C^{-1} r \le s \le C r
 \;,
 \quad
 \lim_{r\to\infty} \frac{\partial s}{\partial r} = e^\beta
 \;.
\ee
We then have
\begin{eqnarray}
 \nonumber
  \varphi_{-1} &= &
  \lim_{r\to\infty}  \frac{  \varphi(r)}{r} =\lim_{s\to\infty}  \frac{\mathring \varphi(s )}{r(s)}=
  \lim_{s\to\infty}  \frac{ \partial_s \mathring \varphi(s)}{\partial_s r(s)} = e^{-\beta}\mathring\varphi_{-1}
\\
 &
   =
    &
     e^{-\beta}\bigg(1  -\frac{1}{n-1} \int_0^\infty e^{-H }\varphi|\sigma|^2\mathrm{d} r \bigg)
   \label{6XII13.4xx}
 \;.
\end{eqnarray}

We have proved:
\begin{proposition}
 Suppose that \eq{9XII13.1}, \eq{second_integral_alter} and \eq{6XII12.6} hold.
  Then the function $\varphi$ is globally positive  with $\varphi_{-1}>0$.
\end{proposition}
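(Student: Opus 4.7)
The proposition essentially assembles the computations already laid out in the surrounding paragraphs into a single statement, so the plan is to follow that thread carefully, making precise where each hypothesis is used.

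First, I would invoke part~1 of Proposition~\ref{P6XII13.1}: under \eq{9XII13.1} and \eq{second_integral_alter}, the function $\varphi$ exists globally, is strictly positive, and correspondingly $\tau>0$ everywhere. In particular, the change of parameter $s(r)=\int_0^r e^{H(\hat r)}\mathrm d\hat r$ in \eq{dfn_mathring_varphi} is a well-defined smooth bijection of $[0,\infty)$ onto itself (this is where \eq{9XII13.1} is first used, to guarantee $s(\infty)=\infty$), and the rescaled function $\mathring\varphi$ solves the linear equation \eq{constraint_phi_alter} with $\mathring\varphi(0)=0$, $\partial_s\mathring\varphi(0)=1$.

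Second, I would establish that the improper integral $\int_0^\infty e^{-H}\varphi|\sigma|^2\,\mathrm dr$ is not only finite but strictly less than $n-1$. The calculation preceding \eq{6XII13.4xx} already does this: one has $\partial_r\varphi = e^H\partial_s\mathring\varphi \le e^H$ by \eq{6XII13.3} applied to $\mathring\varphi$, whence
\[
 \varphi(r) \le \int_0^r e^{H(\tilde r)}\mathrm d\tilde r
 \;,
\]
and inserting this bound into the integral and applying Fubini reduces matters to \eq{second_integral_alter}. Combined with \eq{6XII13.4}, this yields $\mathring\varphi_{-1}>0$.

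Third, I would transport this positivity back from the $s$-parameterisation to the $r$-parameterisation. Here the key ingredient is \eq{6XII12.6}, which together with \eq{12XII13.2} gives $\partial_s r \to e^{-\beta}$ as $s\to\infty$ with $\beta$ finite. Applying l'Hospital once more,
\[
 \varphi_{-1} = \lim_{r\to\infty} \frac{\varphi(r)}{r} = \lim_{s\to\infty}\frac{\partial_s\mathring\varphi(s)}{\partial_s r(s)} = e^{-\beta}\mathring\varphi_{-1}
 \;,
\]
which is the identity \eq{6XII13.4xx}, and is strictly positive by the second step. This gives both global positivity of $\varphi$ and $\varphi_{-1}>0$.

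There is no really hard step here; the only point that requires some care is the justification of the limit $\partial_s r\to e^{-\beta}$, which relies on writing $\partial_s r = e^{-H(r(s))}$ and using \eq{6XII12.6} to conclude $H(r(s))\to\beta$ as $s\to\infty$. Once that is in place, the proof is essentially a concatenation of \eq{6XII13.4} and \eq{6XII13.4xx} with the hypotheses providing exactly the finiteness and sign information needed at each stage.
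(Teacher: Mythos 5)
Your proposal is correct and follows essentially the same route as the paper: global positivity from Proposition~\ref{P6XII13.1}, the bound $\varphi(r)\le\int_0^re^{H(\tilde r)}\mathrm{d}\tilde r$ fed into \eq{6XII13.4} to get $\mathring\varphi_{-1}>0$ via \eq{second_integral_alter}, and then \eq{6XII12.6} to pass from $\mathring\varphi_{-1}$ to $\varphi_{-1}=e^{-\beta}\mathring\varphi_{-1}>0$ as in \eq{6XII13.4xx}. (The appeal to Fubini is unnecessary --- the bound on $\varphi$ is inserted pointwise into the integrand --- but this is cosmetic.)
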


Consider, next, the asymptotic behaviour of $\nu_0$. In addition to \eq{6XII12.6}, we assume now that $\varphi = \varphi_{-1}r + o(r)$, for some function of the angles $\varphi_{-1}$, and that there exists a bounded
function  of the angles $\alpha$  such that
\beal{6XII12.5}
 &
  \displaystyle
  r\overline g^{AB} s_{AB}-\ol  W{}^0 = \frac \alpha {r} + o (r^{-1}) 
 \;.
  &
\eea
Passing to the limit $r\to\infty$ in \eq{11II.4v2} one obtains
\bean
  \nu^{0}(r,x^A)
  & = &  \frac {\alpha(x^A)} {n-1} + o (1)
   \;.
 \label{11II.4v3}
\eea
We see thus that
\bean
  (\nu^{0})_0 > 0
  \quad
  \Longleftrightarrow
  \quad
  \alpha > 0
   \;,
   \qquad
  (\nu_{0})_0 > 0
  \quad
  \Longleftrightarrow
  \quad
  \alpha < \infty
   \;.
 \label{11II.4v4}
\eea

\begin{Remark}
 \label{R13XII13.1}
  {\rm
Note that  \eq{6XII12.6} and \eq{6XII12.5}  will hold with  smooth  functions $\alpha$ and $\beta$ when the a priori restrictions \eq{5XII13.2}-\eq{a_priori_W0}, discussed below, are satisfied and when both $\varphi$ and $\varphi_{-1} $ are positive. Recall also that if $\ol W{}^0 \le 0$ (in particular, if $\ol W{}^0 \equiv 0$), then the condition  $\alpha\ge0$ follows from the fact that both $s_{AB}$ and $\ol g_{AB}$ are Riemannian.
}
\end{Remark}

So far we have justified the expansion \eq{13XII13.11}. For the purposes of Section~\ref{s9XII13.1} we need to push the expansion one order further. This is the contents of the following:

\begin{Proposition}
 \label{P6XII13.2}
Suppose that there exists a Riemannian metric $(\gamma_{AB})_{-2} \equiv (\gamma_{AB})_{-2}(x^C)$ and a tensor field $(\gamma_{AB})_{-1} \equiv (\gamma_{AB})_{-1}(x^C)$ on $S^{n-1}$ such that for large $r$ we have
\begin{eqnarray}
 \label{12XII13.11}
 &
  \gamma_{AB} =r^2 (\gamma_{AB})_{-2} + r(\gamma_{AB})_{-1}  + o (r ) \;,
  &
\\
 &
  \partial_r \big(\gamma_{AB}-r^2(\gamma_{AB})_{-2} - r(\gamma_{AB})_{-1} \big)=   o (1 )
  \;,
  &
 \label{12XII13.12}
\\
 \label{9XII13.1x}
 &
 \displaystyle
 \int_0^ r \kappa(s,x^A)\mathrm{d}s = \beta_0(x^ A)+ \beta_{1}(x^A) r^{-2} + o ( r^{-2})
  \;.
  &
\eea
%
Assume moreover that $\varphi$  exists  for all $r$, with $\varphi>0$. Then:
\begin{enumerate}
   \item
There exist  bounded functions of the angles $\varphi_{-1}\ge 0$ and $\varphi_{0}$ such that
\bel{12XII13.21}
    \varphi (r) =   \varphi_{-1} r +   \varphi_{0}   + O(r^{-1})\;.
\ee

\item
If, in addition, $\nu_0$ exists for all $r$, if it holds that $\varphi_{-1}>0$
 and if  $\ol W{}^0$
 takes the form
$\ol W{}^0(r,x^A)=(\ol W{}^0)_1 (x^A) r^{-1} + o(r^{-1})$ with
\bel{9XII13.3}
 (\ol W{}^0)_1< s_{AB} (\ol g^{AB})_{2}=(\varphi_{-1})^{-2} \Big( \frac{\det\gamma_{-2}}{\det s} \Big)^{1/(n-1)}\gamma_{-2}^{AB} s_{AB}
  \;,
\ee
then
%
$$
 0
    <(\nu_0)_0<\infty
 \;.
$$
\end{enumerate}
\end{Proposition}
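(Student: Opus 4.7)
The strategy is to push the asymptotic analysis underlying Proposition~\ref{P6XII13.1} one order further, exploiting the additional regularity encoded in \eq{12XII13.11}--\eq{9XII13.1x}.

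\textbf{Part 1.} First I would extract from \eq{12XII13.11}--\eq{12XII13.12} the asymptotic behaviour of $|\sigma|^2$. Writing $\gamma_{AB}= r^2 q_{AB} + r h_{AB} + o(r)$ with $q_{AB}:=(\gamma_{AB})_{-2}$ and $h_{AB}:=(\gamma_{AB})_{-1}$, a direct calculation gives $\gamma^{BC}\partial_r \gamma_{AC} = 2 r^{-1}\delta_A{}^B - r^{-2}(q^{-1}h)_A{}^B + o(r^{-2})$; the identity piece cancels upon taking the trace-free part in \eq{definition_sigmaAB}, yielding $\sigma_A{}^B = O(r^{-2})$ and hence $|\sigma|^2 = O(r^{-4})$. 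Passing to the parameter $s$ of \eq{dfn_mathring_varphi}, the transformed equation \eq{constraint_phi_alter} becomes $\partial_s^2\mathring\varphi + V\mathring\varphi = 0$ with potential $V = e^{-2H}|\sigma|^2/(n-1) = O(s^{-4})$, using \eq{9XII13.1x} to bound $e^{-2H}$ and \eq{12XII13.2} to compare $r$ and $s$.

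Since $\partial_s\mathring\varphi$ is non-increasing by concavity of $\mathring\varphi$ on the set where $\mathring\varphi\ge 0$, and $\partial_s\mathring\varphi(0) = 1$, we have $\mathring\varphi(s)\le s$ as long as $\mathring\varphi$ remains positive; together with $V=O(s^{-4})$ this yields $V\mathring\varphi=O(s^{-3})$. Integrating $\partial_s^2\mathring\varphi = -V\mathring\varphi$ from $s$ to $\infty$ and using the limit \eq{6XII13.4} gives
\be
 \partial_s\mathring\varphi(s) = \mathring\varphi_{-1} + \int_s^\infty V(\tilde s)\mathring\varphi(\tilde s)\,\mathrm{d}\tilde s = \mathring\varphi_{-1} + O(s^{-2})\;,
\ee
and a further integration produces $\mathring\varphi(s) = \mathring\varphi_{-1}\,s + \mathring\varphi_0 + O(s^{-1})$ for some angular function $\mathring\varphi_0$. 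To convert this into an expansion in $r$ I would use \eq{9XII13.1x} to write $e^{H(r)} = e^{\beta_0}(1 + \beta_1 r^{-2} + o(r^{-2}))$, integration of which yields $s(r) = e^{\beta_0}\, r + C(x^A) + O(r^{-1})$ for a bounded angular function $C$; inverting and substituting produces \eq{12XII13.21} with $\varphi_{-1} = e^{\beta_0}\mathring\varphi_{-1}$ and $\varphi_0$ a bounded angular function.

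\textbf{Part 2.} Here the main tool is the representation \eq{11II.4v2}. From \eq{definition_Omega}, the assumption $\varphi\sim\varphi_{-1}\,r$ (with $\varphi_{-1}>0$), and the expansion of $\gamma_{AB}$, one obtains
\be
 \ol g^{AB}(r,x^C) = r^{-2}(\ol g^{AB})_2(x^C) + o(r^{-2})\;,\qquad
 (\ol g^{AB})_2 = \varphi_{-1}^{-2}\Bigl(\frac{\det(\gamma)_{-2}}{\det s}\Bigr)^{1/(n-1)}(\gamma)_{-2}^{AB}\;.
\ee
Setting $\alpha(x^A):= s_{AB}(\ol g^{AB})_2 > 0$, one has $\tilde r\,\ol g^{AB}s_{AB} - \ol W^{0} = (\alpha - (\ol W^{0})_1)\tilde r^{-1} + o(\tilde r^{-1})$; combined with $e^{H}\to e^{\beta_0}$ and $\varphi^{(n-1)/2}\sim\varphi_{-1}^{(n-1)/2}\tilde r^{(n-1)/2}$, the integrand in \eq{11II.4v2} is asymptotically $e^{\beta_0}\varphi_{-1}^{(n-1)/2}(\alpha-(\ol W^{0})_1)\tilde r^{(n-3)/2}$, which is positive for large $\tilde r$ by \eq{9XII13.3}. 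Applying L'Hospital's rule to the $\infty/\infty$ quotient, or inserting the leading asymptotics directly, yields
\be
 (\nu^0)_0(x^A) = \frac{\alpha(x^A) - (\ol W^{0})_1(x^A)}{n-1}\;,
\ee
strictly positive by \eq{9XII13.3} and bounded (hence finite) by compactness of $S^{n-1}$ and smoothness of the data. The conclusion $0<(\nu_0)_0<\infty$ follows from $\nu_0 = 1/\nu^0$.

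\textbf{Main obstacle.} The only non-routine step is the careful bookkeeping of the asymptotic expansions of $\sigma_A{}^B$, $\tau$ and $\ol g^{AB}$ derived from that of $\gamma_{AB}$, in particular verifying that the differentiability condition \eq{12XII13.12} is strong enough to justify differentiating the expansion of $\gamma$ before projecting onto its trace-free part. Everything else reduces to integral estimates for the linear ODE \eq{constraint_phi_alter} and elementary asymptotic analysis of the formula \eq{11II.4v2}, in direct continuation of the arguments of Section~\ref{nonvanishing_subsection}.
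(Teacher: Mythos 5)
Your proposal is correct and follows essentially the same route as the paper's proof: deduce $|\sigma|^2=O(r^{-4})$ from the assumed expansion of $\gamma_{AB}$, feed the known leading behaviour $\mathring\varphi=\mathring\varphi_{-1}s+o(s)$ back into the linear equation \eq{constraint_phi_alter} and integrate twice, then translate from $s$ to $r$ via \eq{9XII13.1x}; point 2 is read off from the representation \eq{11II.4v2} exactly as in \eq{6XII12.5}--\eq{11II.4v4}. Your version merely spells out the intermediate estimates (the bound $\mathring\varphi\le s$ and the explicit value of $(\nu^0)_0$) that the paper leaves implicit.
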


\begin{Remark}
 \label{R18XII13.1}
{\rm
If the space-time is not vacuum, then \eq{constraint_phi_alter} becomes
\begin{equation}
\partial^2_s\mathring\varphi( s)
 +e^{-2H(r(s))}  \frac{\big(|\sigma|^2+\ol R_{rr}\big)(r(s))}{n-1}\mathring\varphi(s) =0
 \;.
\label{constraint_phi_alter nonvac}
\end{equation}
and the conclusions of Proposition~\ref{P6XII13.2} remain unchanged if we assume in addition that
\bel{18XII13.10}
 \ol R_{rr} = O(r^{-4})
 \;.
\ee
}
\end{Remark}

\proof
From \eq{definition_sigma} one finds
$$|\sigma|^2 = O(r^{-4})
\;.
$$
We have already seen that
$$
 \mathring \varphi = \mathring \varphi_{-1}s + o(s)
 \;.
$$
Plugging this in the second term in \eq{constraint_phi_alter} and integrating shows that
$$
 \partial_s \mathring \varphi (s) = \mathring \varphi_{-1}  + O(s^{-2})\;,
 \quad
   \mathring \varphi (s) = \mathring \varphi_{-1} s + \mathring \varphi_{0}   + O(s^{-1})\;.
$$
A simple analysis of the equation relating $r$ with $s$ gives now
$$
 \partial_r   \varphi (r) =  \varphi_{-1}  + O(r^{-2})\;,
 \quad
    \varphi (r) =   \varphi_{-1} r +   \varphi_{0}   + O(r^{-1})\;.
$$
This establishes point 1.

When $\varphi_{-1}$ is positive one finds that \eq{6XII12.5} holds, and from what has been said the result follows.
\qed

\section{A no-go theorem for  the ($ {\kappa=0}$, $ \ol W{}^0= 0$)-wave-map gauge}
    \label{s9XII13.1}

Rendall's proposal, to solve the characteristic Cauchy problem using the ($ {\kappa=0}$, $ \ol W{}^\mu= 0$)-wave-map gauge, has been adopted by many authors. The object of this section is to show that,
 in $3+1$ dimensions,
 this approach will always lead to logarithmic terms in an asymptotic expansion of the metric \emph{except for the Minkowski metric}. This makes clear the need to allow non-vanishing gauge-source functions $\ol W^{\mu}$.

%
More precisely, we prove  (compare~\cite{ChoquetBruhat73}):

\begin{Theorem}
 \label{T9XII13.11}
Consider a four-dimensional vacuum space-time $(\mcM,g)$
which has a
  conformal completion at future null infinity $(\mcM\cup\scrip,\tilde g)$
with a $C^3$ conformally rescaled metric, and suppose that there exists a point $O\in \mcM$ such that
$ \overline{C}_O\setminus \{O\}$, where $\overline C_O$ denotes the closure of $C_O$ in $\mcM\cup\scrip$, is a smooth hypersurface in the conformally completed space-time.
If the metric $g$ has no logarithmic terms in its asymptotic expansion for large $r$ in the $  \ol W{}^0=0$  wave-map gauge,
where $r$ is an affine parameter on the generators of $C_O$, then $(\mcM,g)$ is the Minkowski space-time.
\end{Theorem}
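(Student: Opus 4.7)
The plan is to combine the rigidity of the $(\kappa=0,\overline W{}^0=0)$-gauge with the smoothness-at-$\scrip$ hypothesis into an infinite hierarchy of algebraic obstructions which forces $\overline g|_{C_O}$ to be Minkowskian, and then conclude by well-posedness of the characteristic Cauchy problem. I would proceed in three stages.

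\emph{Stage 1 (expansions on $C_O$).} The hypothesis that the conformally rescaled metric $\tilde g=\Omega^2 g$ is $C^3$ across $\overline C_O\cap\scrip$, together with the smoothness of $\overline C_O\setminus\{O\}$ in $\mcM\cup\scrip$, translates, in adapted null coordinates with affine parameter $r$ on the generators of $C_O$, into polynomial expansions in $r^{-1}$ with smooth coefficients on $S^2$ for every component of $\overline g|_{C_O}$, e.g.\
\begin{equation*}
\overline g_{AB}=r^2 h^{(0)}_{AB}+r\,h^{(1)}_{AB}+h^{(2)}_{AB}+O(r^{-1})\;,
\end{equation*}
and analogously for $\nu_0$, $\nu_A$, $\overline g_{00}$; by Proposition~\ref{P6XII13.2}, also $\varphi=\varphi_{-1}r+\varphi_0+\varphi_1 r^{-1}+\ldots$ with $\varphi_{-1}>0$. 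The ``no logs'' hypothesis is precisely that no $\log r$ enters any of these expansions.

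\emph{Stage 2 (identification of the obstruction).} In the $(\kappa=0,\overline W{}^0=0)$-gauge with $n=3$, formula \eq{11II.1} specialises to
\begin{equation*}
\nu^0(r,x^A)=\frac{1}{2\varphi(r,x^A)}\int_0^r \hat r\,\varphi(\hat r,x^A)\,\overline g^{AB}(\hat r,x^A)s_{AB}\,\mathrm{d}\hat r\;.
\end{equation*}
Inserting the Stage~1 expansions and collecting powers of $\hat r$ in the integrand yields, at each order $k\ge 0$, a coefficient of $\hat r^{-1}$ whose integration produces a $\log r$ contribution in $\nu^0$; demanding that $\nu^0$ itself be log-free forces this coefficient to vanish on $S^2$ at every $k$. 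Performing the same analysis for the constraints \eq{eqn_nuA_general}--\eq{eqn_xiA} (logs in $\xi_A,\nu_A$) and \eq{zeta_constraint}--\eq{dfn_zeta} (logs in $\overline g_{00}$), and using the Raychaudhuri equation \eq{constraint_phi} to express the $\varphi_k$'s in terms of shear coefficients, one obtains an infinite tower of identities. These should be shown to force $h^{(0)}_{AB}=s_{AB}$ and the trace-free parts $(h^{(k)}_{AB})\breve{}$ to vanish for every $k$, so that the whole trace $\overline g|_{C_O}$ coincides with the Minkowskian one.

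\emph{Stage 3 (propagation).} Once $\overline g|_{C_O}$ agrees with the flat metric on a Minkowskian cone, well-posedness and uniqueness of the reduced vacuum Einstein system in a $\hat g$-wave-map gauge applied to the characteristic Cauchy problem identify $(\mcM,g)$ with Minkowski space-time in the domain of dependence of the smooth part of $C_O$; standard unique-continuation arguments for the vacuum Einstein equations then yield $(\mcM,g)\cong(\R^{n+1},\eta)$ in full.

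\emph{Main obstacle.} The technical core is Stage~2: disentangling the coupling between the conformal factor $\varphi$, rigidly controlled by Raychaudhuri, and the genuine gravitational data hidden in the $h^{(k)}_{AB}$'s, and verifying that the log-cancellation hierarchy admits only the trivial solution. The underlying geometric reason is that with $\overline W{}^0=0$ the target term $\overline{\hat\Gamma}{}^0=-r\,\overline g^{AB}s_{AB}$ is rigidly tied to the round-sphere target, so its mismatch with the true asymptotic $\overline g^{AB}$ at every order becomes the source of the logarithms: the only way to cancel all of them simultaneously is for the light cone itself to be Minkowskian, which is precisely the content of the no-go statement.
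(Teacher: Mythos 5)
Your Stage~2 contains the essential gap. You propose to derive an infinite tower of algebraic identities among the asymptotic expansion coefficients $h^{(k)}_{AB}$ by cancelling logarithms order by order, and to conclude that all of these coefficients are Minkowskian. Even if that tower could be closed, it could not prove the theorem: conditions on asymptotic expansion coefficients alone never determine the data on the whole cone (a shear compactly supported in $r$, or decaying faster than any power, has identically vanishing expansion coefficients and would pass every such order-by-order test, while the space-time is certainly not flat). The actual mechanism in the paper is a single \emph{global, signed integral} identity, not a local hierarchy. Because the solution formula \eq{11II.1} for $\nu^0$ integrates from the vertex $r=0$, the coefficient of the $r^{-1}\ln r$ term in $\nu^0$ is $\tfrac12\tau_2(\varphi_{-1})^{-2}$, and Lemma~\ref{L9XII13.1} identifies
$\tau_2=-\lim_{r\to\infty}r^2\Psi^{-1}\int_0^r\big(|\sigma|^2+\overline R_{\mu\nu}\ell^\mu\ell^\nu\big)\Psi\,\mathrm{d}r$
with $\Psi>0$. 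Since the integrand is pointwise non-negative (in vacuum it is $|\sigma|^2\Psi$), the vanishing of this one coefficient forces $\sigma\equiv 0$ \emph{everywhere on $C_O$}, not merely asymptotically. This positivity argument is what converts an asymptotic no-log condition into a rigidity statement on the whole cone, and it is entirely absent from your proposal. Note also that there is essentially one log obstruction in $\nu^0$ here, not one per order $k$: once the $\hat r^{-1}$ coefficient of the integrand vanishes, the lower-order terms integrate to pure powers of $r$.

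Your Stage~3 is then moot in its present form, since Stage~2 does not deliver Minkowskian cone data. The paper instead concludes directly from $\sigma\equiv 0$ via the light-cone theorem of \cite{CCG}, which is the correct replacement for your appeal to well-posedness plus unique continuation. Stage~1 of your argument (transversal intersection with $\scrip$, the relation $r\sim a(x^A)/x$ between affine parameters, and the expansions \eq{9XII13.21}--\eq{9XII13.21b}) does match the paper's preliminary reduction, as does your closing heuristic that $\ol W{}^0=0$ rigidly ties $\ol{\hat\Gamma}{}^0$ to the round target; but the heuristic must be implemented through the integral identity above, not through an order-by-order expansion.
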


\proof
Let $S \subset \scrip$
denote the intersection of $\overline C_O$ with $ \scrip$. Elementary arguments show
that  $\overline C_O$ intersects $ \scrip$ transversally and that $S$ is diffeomorphic to $S^2$. Introduce near $S$ coordinates so that $S$ is given  by the equation $\{u=0=x\}$, where $x$ is an $\tilde g$-affine
parameter along the generators of $\overline C_O$, with $x=0$ at $S$, while the $x^A$'s are coordinates on $S$ in which the metric induced by $\check g$
is manifestly conformal to the round-unit metric $s_{AB}\mathrm{d}x^A \mathrm{d}x^B$ on $S^2$.
(Note that for finitely-differentiable metrics this construction might lead to the loss of one derivative of the metric.) The usual calculation shows that the $g$-affine parameter $r$ along the generators of $\overline C_O$ equals $a(x^A)/x $
for some positive function of the angles $a(x^A)$. Discarding strictly positive conformal factors, we conclude that for large $r$ the tensor field $\check g$ is conformal to a tensor field $
\gamma_{AB} \mathrm{d}x^A \mathrm{d}x^B$ satisfying
\begin{eqnarray}
 \label{9XII13.21}
 &
 \gamma_{AB} = r^2\big(s_{AB}  + (\gamma_{AB})_{-1} r^{-1} + o (r^{-1} )\big)
 \;,
  &
\\
 &
  \partial_r \big(\gamma_{AB}-r^2 s_{AB} - r(\gamma_{AB})_{-1} \big)=   o (1 )
  \;.
  &
 \label{9XII13.21b}
\eea
The result follows now immediately from~\cite{CCG} and from our next Theorem~\ref{T21IV11.1}.
\qed

\begin{theorem}
 \label{T21IV11.1}
 Suppose that the space-dimension $n$ equals three.
Let $r|\sigma|$, $r\ol W^0$  and $r^2\ol R_{\mu\nu} \ell^\mu \ell ^\nu$ be bounded for small $r$.
Suppose that $\gamma_{AB}(r,x^A)$ is positive definite for all $r>0$  and admits the expansion \eq{9XII13.21}-\eq{9XII13.21b}, for large $r$
with the coefficients in the expansion depending only upon $x^C$.
Assume that the first constraint  equation \eq{constraint_phi} with $\kappa=0$ and
$$
 0\le \overline R_{\mu\nu}\ell^\mu \ell^\nu =O(r^{-4})
$$
has a globally defined positive solution satisfying $\varphi(0)=0$, $\partial_r\varphi(0)=1$, $\varphi >0$, and $\varphi_{-1}>0$.
Then there are no logarithmic terms in the asymptotic expansion of $\nu^0$ in a gauge where $ {\kappa=0}$ and $ \ol W{}^0= o(r^{-2})$ (for large $r$)  if and only if
$$
 \sigma\equiv 0 \equiv \overline R_{\mu\nu}\ell^\mu \ell^\nu
 \;.
$$
\end{theorem}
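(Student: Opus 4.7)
My plan is to isolate the coefficient of $\log r/r$ in the asymptotic expansion of $\nu^0$, show that it is proportional to $\varphi_0$, and then read off from the Raychaudhuri constraint that $\varphi_0$ equals a non-negative integral functional of $|\sigma|^2+\overline R_{\mu\nu}\ell^\mu\ell^\nu$ which vanishes iff the integrand does. I would start from the integral representation \eq{11II.4v2} specialised to $\kappa=0$, $H=0$ and $n=3$,
\[
 \nu^0=\frac{1}{2\varphi}\int_0^r \varphi\,\bigl(\hat r\,\overline g^{AB}s_{AB}-\overline W{}^0\bigr)\,d\hat r\,,
\]
and examine the $\hat r^{-1}$-coefficient of its integrand.

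Using $\overline g^{AB}=\Omega^{-2}\gamma^{AB}$ with $\Omega^{-2}=\varphi^{-2}(\det\gamma/\det s)^{1/2}$ (the case $n=3$ of \eq{definition_Omega}), the main part of the integrand becomes $\hat r\,M/\varphi$ with $M:=(\det\gamma/\det s)^{1/2}\gamma^{AB}s_{AB}$. Plugging in the hypothesis $\gamma_{AB}=r^2s_{AB}+r\,h_{AB}+o(r)$, with $h_{AB}:=(\gamma_{AB})_{-1}$, one computes
\[
(\det\gamma/\det s)^{1/2}=r^2+\tfrac{r}{2}\,\mathrm{tr}_s h+O(1)\,,\quad \gamma^{AB}s_{AB}=2r^{-2}-r^{-3}\,\mathrm{tr}_s h+O(r^{-4})\,,
\]
and the two $\mathrm{tr}_s h$ contributions cancel exactly in the product, giving the crucial identity $M=2+O(r^{-2})$. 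Combined with $\varphi=\varphi_{-1}r+\varphi_0+O(r^{-1})$ from Proposition~\ref{P6XII13.2}, and with $\varphi\overline W{}^0=o(\hat r^{-1})$ following from $\overline W{}^0=o(\hat r^{-2})$, the integrand expands as
\[
 \varphi\,\bigl(\hat r\,\overline g^{AB}s_{AB}-\overline W{}^0\bigr)=\frac{2}{\varphi_{-1}}-\frac{2\varphi_0}{\varphi_{-1}^2}\,\hat r^{-1}+o(\hat r^{-1})\,.
\]
Integration generates a $\log r$ term with coefficient $-2\varphi_0/\varphi_{-1}^2$, and dividing by $2\varphi\sim 2\varphi_{-1}r$ produces a $\log r/r$ term in $\nu^0$ with coefficient $-\varphi_0/\varphi_{-1}^3$. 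Absence of logarithms in $\nu^0$ therefore forces $\varphi_0\equiv 0$.

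The final step relates $\varphi_0$ to the vanishing of $\sigma$ and $\overline R_{rr}$. Rewriting \eq{constraint_phi} with $\kappa=0$, including the non-vacuum correction of Remark~\ref{R18XII13.1}, as $\partial_r^2\varphi=-(|\sigma|^2+\overline R_{rr})\varphi/(n-1)$, integrating twice with $\varphi(0)=0$, $\partial_r\varphi(0)=1$, and exchanging the order of integration (Fubini), one obtains
\[
 \varphi(r)=r-\frac{1}{n-1}\int_0^r(r-\tilde r)(|\sigma|^2+\overline R_{rr})\varphi(\tilde r)\,d\tilde r\,.
\]
Reading off the $r^0$-coefficient as $r\to\infty$ yields
\[
 \varphi_0=\frac{1}{n-1}\int_0^\infty \tilde r\,(|\sigma|^2+\overline R_{rr})\,\varphi(\tilde r)\,d\tilde r\,,
\]
a manifestly non-negative integral that converges thanks to $|\sigma|^2=O(r^{-4})$ (forced by the expansion of $\gamma$), the assumption $\overline R_{rr}=O(r^{-4})$ at infinity, and the hypotheses that $r|\sigma|$ and $r^2\overline R_{rr}$ are bounded near the vertex. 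By smoothness, $\varphi_0\equiv 0$ iff $\sigma\equiv 0\equiv\overline R_{\mu\nu}\ell^\mu\ell^\nu$, establishing the \textquotedblleft only if\textquotedblright\ direction. For the \textquotedblleft if\textquotedblright\ direction, $\sigma\equiv 0$ together with $(\gamma_{AB})_{-2}=s_{AB}$ forces $\gamma_{AB}(r,x)=F(r,x)\,s_{AB}(x)$ for some positive $F$, whence $\overline g_{AB}=\varphi^2 s_{AB}=r^2 s_{AB}$ via \eq{definition_Omega}, reducing the integrand to $2-\hat r\overline W{}^0$ and producing no logarithm upon integration (provided $\overline W{}^0$ admits a log-free polyhomogeneous expansion in the class $\overline W{}^0=o(r^{-2})$, e.g.\ $\overline W{}^0\equiv 0$).

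The main technical obstacle is the cancellation $M=2+O(r^{-2})$: this identity uses only the first two orders of the expansion of $\gamma$ assumed in the theorem and is precisely what makes the $\log r/r$-coefficient depend solely on the universally defined quantity $\varphi_0$ rather than on undetermined higher-order initial data. Once this is in hand, the remainder is straightforward manipulation of the Raychaudhuri ODE.
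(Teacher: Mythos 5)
Your argument is correct and follows the same overall strategy as the paper: expand the integrand of the solution formula \eq{11II.1b} for $\nu^0$, isolate the $\hat r^{-1}$ coefficient that integrates to a $\ln r$ term, and show that this coefficient is a sign-definite integral of $|\sigma|^2+\overline R_{\mu\nu}\ell^\mu\ell^\nu$. Where you differ is in the key positivity lemma. The paper works with $\tau$: its Lemma~\ref{L9XII13.1} solves the first-order Riccati-type equation for $\delta\tau=\tau-\tau_1$ by an integrating factor $\Psi$ and identifies the log coefficient with $\tau_2=-\lim r^2\Psi^{-1}\int_0^r(|\sigma|^2+\overline R_{rr})\Psi$, whose sign is manifest because $\Psi>0$. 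You instead integrate the linear second-order equation \eq{constraint_phi} twice and use Fubini to get the closed formula $\varphi_0=\frac{1}{n-1}\int_0^\infty \tilde r\,(|\sigma|^2+\overline R_{rr})\varphi\,d\tilde r$; since $\tau_2=-2\varphi_0/\varphi_{-1}$ the two statements are equivalent, but your version is more elementary (the existence of the limit and the convergence of the weight are immediate, and the hypotheses $r|\sigma|$, $r^2\overline R_{rr}$ bounded near the vertex enter transparently). Your computation of the expansion of $\hat r\,\overline g^{AB}s_{AB}$ via the determinant identity $M=2+O(r^{-2})$ is also a clean substitute for the paper's explicit inverse-metric expansion; the cancellation you exhibit is exactly the tracelessness of $\breve\gamma_{-1}^{AB}$ used there. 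Both proofs are equally terse on the ``if'' direction (ruling out all higher-order logarithms when $\sigma\equiv0\equiv\overline R_{rr}$, which depends on the expansion assumed for $\ol W{}^0$); your parenthetical caveat on this point matches the paper's level of detail.
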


{\noindent \sc Proof of Theorem~\ref{T21IV11.1}:}
At the heart  of the proof lies the following observation:

\begin{Lemma}
  \label{L9XII13.1}
  In space-dimension $n$,
suppose that $\kappa=0$ and set
\bel{9XII13.9}
 \Psi = r^2 \exp(  \int_0^r \big(\frac {\tau+\tau_1} 2 - \frac {n-1} r\big)\,\mathrm{d}r )
 \;.
\ee
with $\tau_1 \equiv (n-1)/r$
We have $\tau =  (n-1) r^{-1}  + \tau_2 r^{-2} + o(r^{-2})$,
where
\bel{9XII13.10}
 \tau_2 := - \lim_{r\to\infty} r^2 {\Psi^{-1}} \times \int_0^r  (  |\sigma|^2 + \overline  R_{\mu\nu}\ell^\mu \ell^\nu)\Psi \,\mathrm{d}r
 \;,
\ee
provided that the limit exists.
\end{Lemma}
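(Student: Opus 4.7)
The plan is to exploit the near-linearity of the Raychaudhuri equation around its Minkowskian solution $\tau_1 = (n-1)/r$. Setting $\tilde\tau := \tau - \tau_1$, the constraint $\partial_r\tau + \tau^2/(n-1) + |\sigma|^2 + \overline R_{\mu\nu}\ell^\mu\ell^\nu = 0$ (valid in the $\kappa=0$ gauge, with the matter term included per Remark~\ref{R18XII13.1}) rearranges, using $\partial_r\tau_1 = -\tau_1^2/(n-1)$ together with the algebraic identity $2\tilde\tau/r + \tilde\tau^2/(n-1) = \tilde\tau(\tau+\tau_1)/(n-1)$, into the \emph{linear} first-order ODE
\begin{equation*}
\partial_r\tilde\tau + \frac{\tau+\tau_1}{n-1}\,\tilde\tau \,=\, -\big(|\sigma|^2 + \overline R_{\mu\nu}\ell^\mu\ell^\nu\big)\,.
\end{equation*}
For $n=3$, the prefactor is exactly $(\tau+\tau_1)/2$, which is precisely what enters the definition of $\Psi$.

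Next I would identify $\Psi$ as the natural integrating factor. A direct computation gives $\partial_r\log\Psi = 2/r + (\tau+\tau_1)/2 - (n-1)/r = (\tau+\tau_1)/2 - (n-3)/r$, so for $n=3$ the function $\Psi$ solves exactly the integrating-factor equation. The $r^{2}$-normalisation is admissible because $(\tau+\tau_1)/2 - (n-1)/r = \tilde\tau/2$ is bounded near the vertex: regularity of the metric at $O$ forces $\gamma_{AB} = r^2 s_{AB} + O(r^4)$, whence $\tau = (n-1)/r + O(r)$ and $\tilde\tau = O(r)$ near the origin, so the integrand defining $\Psi$ is integrable at $r=0$. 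Multiplying the linear ODE by $\Psi$ converts it to
\begin{equation*}
\partial_r\big(\Psi\,\tilde\tau\big) \,=\, -\Psi\,\big(|\sigma|^2 + \overline R_{\mu\nu}\ell^\mu\ell^\nu\big)\,.
\end{equation*}

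Integrating from $0$ to $r$ completes the argument. The boundary term at the origin vanishes because $\Psi = O(r^2)$ while $\tilde\tau = O(r)$; the integral on the right is absolutely convergent near $r=0$ because the hypotheses that $r|\sigma|$ and $r^2\overline R_{\mu\nu}\ell^\mu\ell^\nu$ be bounded force $\Psi\,(|\sigma|^2+\overline R_{\mu\nu}\ell^\mu\ell^\nu) = O(1)$ there. One therefore obtains
\begin{equation*}
\Psi(r)\,\tilde\tau(r) \,=\, -\int_0^r \Psi\,\big(|\sigma|^2 + \overline R_{\mu\nu}\ell^\mu\ell^\nu\big)\,\mathrm d r\,.
\end{equation*}
Dividing by $\Psi(r)/r^2$ and passing to the limit $r\to\infty$ (assumed to exist in the lemma's hypothesis) extracts
$\tau_2 = \lim_{r\to\infty} r^2\tilde\tau(r)$ together with the displayed integral representation, whence $\tau = (n-1)/r + \tau_2 r^{-2} + o(r^{-2})$. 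The only delicate point in the argument is the matching at the vertex — ensuring that the $O(r^2)$ vanishing of $\Psi$ absorbs both the $O(r^{-2})$ singularity of the source and the $O(r)$ vanishing of $\tilde\tau$ — but all the required bounds are already built into the ambient assumptions of Theorem~\ref{T21IV11.1}.
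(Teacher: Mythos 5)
Your proof is correct and follows essentially the same route as the paper: set $\tilde\tau=\tau-\tau_1$, observe that the Raychaudhuri equation linearises to $\partial_r\tilde\tau+\tfrac{\tau+\tau_1}{n-1}\tilde\tau=-(|\sigma|^2+\overline R_{\mu\nu}\ell^\mu\ell^\nu)$, integrate with $\Psi$ as integrating factor from the vertex, and pass to the limit. If anything you are more careful than the paper's own two-line argument, since you make explicit both that the stated $\Psi$ is the exact integrating factor only for $n=3$ (the case actually used in Theorem~\ref{T21IV11.1}) and that the boundary term and the integral are under control at $r=0$.
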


\begin{proof}
Let  $\delta \tau = \tau - \tau_1$.
It follows from the Raychaudhuri equation with $\kappa=0$ that $\delta \tau$ satisfies the equation
$$
 \frac{  \mathrm{d}\delta \tau } {\mathrm{d}r} + \frac {\tau+\tau_1} 2 \delta \tau = -|\sigma|^2 - 8 \pi \ol T_{rr}
 \;.
$$
Solving, one  finds
\beaa
 \delta \tau  & = & -\Psi^{-1}   \int_0^r (  |\sigma|^2 +  8 \pi \ol T_{rr} )\Psi \,\mathrm{d}r
\\
 & = & \frac{\tau_2 }{r^2} +o(r^{-2})
 \;,
\eeaa
as claimed.
\qed
\end{proof}

Let us return to the proof of Theorem~\ref{T21IV11.1}.  Proposition~\ref{P6XII13.2} and Remark~\ref{R18XII13.1} show that
\beal{9XII13.12}
&
 \varphi (r,x^A)= \varphi_{-1}(x^A)r + \varphi_0(x^A) + o(r^{-1})
 \;,
 &
\\
 &
 \tau \,\equiv\, 2\partial_r\log\varphi \,=\,  2r^{-1} -2\varphi_0(\varphi_{-1})^{-1}r^{-2}
 + o(r^{-2})
 \;.
 &
\eeal{9XII13.13}
Recall, next, the solution formula \eq{11II.1} for the constraint equation (\ref{constraint_nu0}) with $\kappa=0$  and   $n=3$:
\begin{equation}
  \nu^{0}(r,x^A)
  = \frac{1}{2\varphi (r,x^A)}\int_0^r \varphi \left(s\overline{g}{}^{AB}s_{AB} - \ol W{}^0\right)(s,x^A)\, \mathrm{d}s
  \;.
 \label{11II.1b}
\end{equation}
%
%
%
From \eq{9XII13.12}-\eq{9XII13.13} one finds
\begin{equation}
 \ol g^{AB} = r^{-2} (\varphi_{-1})^{-2}[s^{AB} + r^{-1}(\tau_2 s^{AB} - \breve \gamma_{-1}^{AB}) + o(r^{-1})]
\;,
\end{equation}
with
$$
 \breve \gamma_{-1}^{AB} := s^{AC} s^{BD}[(\gamma_{CD})_{-1}- \frac{1}{2}s_{CD}s^{EF}(\gamma_{EF})_{-1} ]\;.
$$
Inserting this into \eq{11II.1b},  and assuming that $\ol W^{0}=o(r^{-2})$, one finds for large $r$
\begin{equation}
 \nu^ 0 =  (\varphi_{-1})^{-2}+  \frac{1}{2}\tau_2 (\varphi_{-1})^{-2}\frac{\ln r} r + O(r^{-1})
 \;,
\end{equation}
with the coefficient of the logarithmic term vanishing if and only if $\tau_2=0$ when a bounded positive coefficient $\varphi_{-1}$ exists.
One can check  that the hypotheses of Lemma~\ref{L9XII13.1} are satisfied, and the result follows.
\qed

\section{Preliminaries to solve the constraints asymptotically}
 \label{s12XII13.2}

\subsection{Notation and terminology}
 \label{ss12XII13.2}

Consider a metric which has a smooth, or polyhomogeneous, conformal completion at infinity \emph{\`a la Penrose}, and suppose that the closure (in the completed space-time) $\ol{\mcN}$ of a null hypersurface  $\mcN$ of $O$  meets $\scrip$ in a smooth sphere.
One can then introduce Bondi coordinates $(u,r,x^A)$ near $\scrip$, with $\ol{\mcN}\cap \scrip$ being the level set of a Bondi retarded coordinate $u$ (see~\cite{TamburinoWinicour} in the smooth case, and \cite[Appendix~B]{ChMS} in the polyhomogeneous case).
The resulting Bondi area coordinate $r$ behaves as $1/\Omega$, where $\Omega$ is the compactifying factor. If one uses $\Omega$ as one of the coordinates near $\scrip$, say $x$, and chooses $1/x$ as a parameter along the generators of $\mcN$, one is led to an asymptotic behaviour of the metric which is captured by the following definition:

%
\begin{definition}
\label{definition_smooth}
\rm{
We say that a smooth metric tensor $\overline g_{\mu\nu}$  defined on a null hypersurface $\mcN$ given in adapted null coordinates has a
\textit{smooth conformal completion at infinity}
if the unphysical metric tensor field $\overline{\tilde g}_{\mu\nu}$
obtained via
the coordinate transformation $r\mapsto 1/r=: x$ and the conformal rescaling $\overline g\mapsto  \overline{\tilde g} \equiv x^2 \overline g$
is, as a Lorentzian metric, smoothly extendable across $\{x=0\}$. We will say that $\overline g_{\mu\nu}$ is \emph{polyhomogeneous} if the conformal extension obtained as above is polyhomogeneous at $\{x=0\}$, see Appendix~\ref{A22XII13.1}.

The components of a smooth tensor field on $\mcN$ will be said to be \textit{smooth  at infinity},
respectively \emph{polyhomogeneous at infinity}, whenever they admit, in the $(x,x^A)$-coordinates, a smooth, respectively polyhomogeneous, extension in the conformally rescaled space-time across $\{x=0\}$.
}
\end{definition}

 \begin{Remark}
  \label{R12XII13.1}
 {\rm The reader is warned that the definition contains an implicit restriction, that} $\mcN$
 is a smooth hypersurface in the conformally completed space-time. {\rm In the case of a light-cone, this excludes existence of points which are conjugate
 to $O$ both in $\mcM$ and on $\overline C_O\cap \scrip$.}
 \end{Remark}

We emphasise that Definition~\ref{definition_smooth} concerns only fields on $\mcN$, and \emph{no assumptions are made concerning existence, or properties, of an associated space-time.} In particular there \emph{might not} {be} an associated space-time; and if there is one, it \emph{might or might not} have a smooth completion through a conformal boundary at null infinity.

The conditions of the definition are both conditions on the metric and on the coordinate system. While the definition restricts  the class of parameters $r$,   there remains considerable freedom, which  will be exploited in what follows.
It should  be clear that the existence of a coordinate system  as above on a globally-smooth light-cone is a necessary condition for a space-time to admit a smooth conformal completion at null infinity, for points $O$ such that $\overline{C}_O\cap \scrip$ forms a smooth hypersurface in the conformally completed space-time.

Consider a real-valued function
\begin{equation*}
 f : (0,\infty)\times S^{n-1} \longrightarrow \mathbb{R} \;, \quad (r,x^A) \longmapsto f(r,x^A)
 \;.
\end{equation*}
If this function admits an asymptotic expansion in terms of powers of $r$ (whether to finite or arbitrarily high order)
we denote by $f_n$, or $(f)_n$, the coefficient of $r^{-n}$
 in the expansion.

 We will write $f=\mathcal{O}(r^N)$ (or $f=\mathcal{O}(x^{-N})$, $x\equiv1/r$), $N\in\mathbb{Z}$   if
the function
%
\begin{equation}
 \label{4XII13.1}
 F(x,\cdot) := x^N f(x^{-1},\cdot )
\end{equation}
is smooth at $x=0$. We emphasize that this is a restriction on $f$ for large $r$, and the condition does not say anything
about the behaviour of $f$ near the vertex of the cone (whenever relevant), where $r$ approaches zero.

We write
 \[   f(r,x^A) \ourdoteq
   \sum_{k=-N}^{\infty} f_k(x^A) r^{-k} \]
if the right-hand side is the asymptotic expansion at $x=0$ of the function $x\mapsto r^{-N} f(r, \cdot)|_{r=1/x}$, compare Appendix~\ref{A22XII13.1}.

The next lemma summarizes some useful properties of the symbol $\mathcal{O}$:
\begin{lemma}
 Let $f=\mathcal{O}(r^N)$ and $g=\mathcal{O}(r^M)$ with $N,M\in\mathbb{Z}$.
\begin{enumerate}
 \item $f$ can be asymptotically expanded as a power series starting from $r^N$,
 \[   f(r,x^A) \ourdoteq
   \sum_{k=-N}^{\infty} f_k(x^A) r^{-k} \]
 for some suitable smooth functions $f_k: S^{n-1} \rightarrow \mathbb{R}$.
 \item The $n$-th order derivative, $n\geq 0$, satisfies
\[ \partial^n_rf(r,x^A) = \begin{cases}
   \mathcal{O}(r^{N-n})
    \;,
   & \text{for $N<0$,} \\
       \mathcal{O}(r^{N-n})
       \;,
   & \text{for $N\geq 0$ and  $N-n\geq 0$,} \\
   \mathcal{O}(r^{N-n-1})
   \;,
    & \text{for  $N\geq 0$ and $N-n\leq -1$,}
  \end{cases}
\]
  as well as
  \[ \partial^n_A f(r,x^B) = \mathcal{O}(r^N)\;. \]
\item $f^n  g^m = \mathcal{O}(r^{nN+mM})$ for all $n,m\in\mathbb{Z}$.
\end{enumerate}
\end{lemma}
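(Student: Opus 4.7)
The plan is to reduce every claim to the variable $x := 1/r$, in which the hypothesis $f = \mathcal{O}(r^N)$ becomes the statement that $F(x,x^A) := x^N f(1/x, x^A)$ extends smoothly across $\{x=0\}$. For item 1, Taylor's theorem applied to $F$ at $x=0$ provides an asymptotic expansion $F(x,x^A) \ourdoteq \sum_{j \geq 0} a_j(x^A) x^j$ with each $a_j$ smooth on $S^{n-1}$. Substituting $x = 1/r$ gives $f(r,x^A) = r^N F(1/r,x^A) \ourdoteq \sum_{j \geq 0} a_j(x^A)\, r^{N-j}$, and relabeling via $k := j - N$ (so that $k \geq -N$) produces the stated expansion with $f_k := a_{k+N}$.

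For the angular derivatives in item 2, the coordinate change affects only $r$, so $\partial_A$ commutes with the transformation $f \mapsto F$; since $\partial_A F$ is again smooth at $\{x=0\}$, iteration gives $\partial^n_A f = \mathcal{O}(r^N)$. For the $r$-derivative bound I would differentiate the expansion from item 1 termwise to obtain
\[
\partial_r^n f \;\ourdoteq\; \sum_{k \geq -N} (-k)(-k-1)\cdots(-k-n+1)\, f_k\, r^{-k-n}\,,
\]
whose polynomial prefactor vanishes precisely on $k \in \{-n+1, \ldots, 0\}$. This forces a case split exactly mirroring the statement. If $N < 0$, then the sum starts at $k = -N \geq 1$, no cancellations occur, and the leading power is $r^{-(-N+n)} = r^{N-n}$. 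If $N \geq 0$ with $N \geq n$, the prefactor at the leading index $k = -N$ evaluates to $N(N-1)\cdots(N-n+1) \neq 0$, so again the leading power is $r^{N-n}$. Finally, if $N \geq 0$ with $N < n$, every index $k \in \{-N, -N+1, \ldots, 0\}$ lies in the vanishing set, so these terms drop out; the next surviving index is $k=1$, contributing $(-1)^n n!\, f_1\, r^{-n-1}$. Since $N \geq 0$ in this last subcase, $r^{-n-1} \leq r^{N-n-1}$ for large $r$, yielding the bound $\mathcal{O}(r^{N-n-1})$ as claimed.

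For item 3, writing $F$ and $G$ for the smooth extensions associated to $f$ and $g$, one has
\[
(f^n g^m)(r,x^A) = r^{nN+mM}\, F(1/r,x^A)^n\, G(1/r,x^A)^m\,,
\]
so that $x^{nN+mM}(f^n g^m)(1/x, x^A) = F^n G^m$. For nonnegative $n,m$ this is manifestly smooth at $x=0$, and the definition of $\mathcal{O}(r^{nN+mM})$ applies directly; for negative exponents one additionally uses that $F(0,\cdot)$ or $G(0,\cdot)$ is assumed non-vanishing, which is the tacit hypothesis whenever such powers are considered.

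There is no serious obstacle here; the only point requiring a moment of care is the third subcase of item 2, where the leading coefficient of the naive differentiated series is annihilated by the repeated $r$-differentiation and one must descend past several indices to identify the true leading order. The resulting bound $\mathcal{O}(r^{N-n-1})$ is not sharp in general, but it is the uniform formulation sufficient for the asymptotic bookkeeping used in the subsequent sections.
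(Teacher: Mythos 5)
The paper states this lemma without proof, so there is nothing to compare against line by line; on its own terms your argument is correct, and the case analysis in item 2 (the vanishing set $k\in\{-n+1,\dots,0\}$ of the prefactor, and the resulting non-sharp but uniform bound $\mathcal{O}(r^{N-n-1})$ in the third subcase) is exactly right. Two remarks. First, for the $r$-derivative bound it is slightly cleaner, and closer to the letter of the definition, to work directly with $\partial_r=-x^2\partial_x$ acting on $f(1/x,\cdot)=x^{-N}F(x,\cdot)$: an easy induction gives $\partial_r^nf(1/x,\cdot)=x^{n-N}G_n(x,\cdot)$ with $G_{n+1}=(N-n)G_n-x\,\partial_xG_n$ smooth, which immediately yields the required \emph{smoothness} of the rescaled function rather than merely an asymptotic expansion; your termwise differentiation of the series is fine, but strictly speaking one should add the remark that the differentiated series is the Taylor series of this manifestly smooth function, since the $\mathcal{O}$-symbol is defined by smoothness at $x=0$ and not by the existence of an expansion. (The extra order in the third subcase then comes from $G_n(0,\cdot)=N(N-1)\cdots(N-n+1)F(0,\cdot)=0$ when $0\le N\le n-1$.) Second, your observation that item 3 for negative exponents requires the tacit hypothesis that $F$ (resp.\ $G$) does not vanish at $x=0$ is a genuine and worthwhile point: as literally stated, ``for all $n,m\in\mathbb{Z}$'' fails without it (e.g.\ $f=r^{N-1}$ is $\mathcal{O}(r^N)$ but $f^{-1}$ is not $\mathcal{O}(r^{-N})$), and in the paper this hypothesis is always met where inverses are taken ($\varphi$, $\nu^0$, $\det\gamma_{AB}$ are required to be strictly positive up to and including infinity).
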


\subsection{Some a priori restrictions}
\label{apriorirestrictions}

In order to solve the constraint equations asymptotically and derive necessary-and-sufficient conditions concerning smoothness of the solutions at infinity in adapted coordinates, it is convenient to have some a priori knowledge regarding the lowest admissible orders of certain functions appearing in these equations,
and to exclude the appearance of logarithmic terms in the expansions of  fields such as $\xi_A$ and $\ol W^{\lambda}$.
Let us therefore derive the necessary restrictions on the metric, the gauge source functions, etc.\
needed to end up with a trace of a metric on the light-cone which admits a smooth conformal completion at infinity.

\subsubsection{Non-vanishing of $\varphi$ and $\nu^0$}

As described above, the Einstein wave-map gauge constraints
can be represented as a system of  \textit{linear} ODEs for $\varphi$, $\nu^0$, $\nu_A$ and $\ol{g}^{11}$,
so that existence and
uniqueness (with the described boundary conditions)
 of global solutions is guaranteed if the coefficients in the relevant ODEs are globally defined.
Indeed, we have to make sure that the resulting symmetric tensor field $\overline g_{\mu\nu}$ does not degenerate,  so that it  represents a regular Lorentzian metric in the respective adapted null coordinate system.
In a setting where the starting point are conformal data $\gamma_{AB}(r,\cdot)\mathrm{d}x^A\mathrm{d}x^B$ which
 define a Riemannian metric for all $r>0$, this will be the case if and only if $\varphi$ and $\nu^0$ are nowhere vanishing, in fact \emph{strictly positive} in our conventions,
\begin{equation}
 \varphi >0 \;, \  \nu^0 >0 \quad \forall \,r>0\;.
\end{equation}

\subsubsection{A priori restrictions on  $\overline g_{\mu\nu}$}
\label{apriori_subsection}

Assume that $\ol g_{\mu\nu}$ admits a smooth conformal completion in the sense of Definition~\ref{definition_smooth}. Then
 its conformally rescaled counterpart $\ol{\tilde g}_{\mu\nu} \equiv x^2 \ol g_{\mu\nu}$
satisfies
\begin{equation}
 \overline{\tilde g}_{\mu\nu} = \mathcal{O}(1) \quad \text{with} \quad \ol{\tilde g}_{0x}|_{x=0}> 0\;, \quad \det \ol{\tilde g}_{AB}|_{x=0} > 0
 \;.
\end{equation}
This imposes the following restrictions on the admissible asymptotic form of the components $g_{\mu\nu}$ in  adapted null coordinates $(u,r\equiv 1/x,x^A)$:
\begin{equation}
 \nu_0 \,=\, \mathcal{O}(1)\;, \quad
 \nu_A \,=\,  \mathcal{O}(r^2)\;, \quad
 \overline g_{00} \,=\, \mathcal{O}(r^2)\;, \quad
 \overline g_{AB} \,=\, \mathcal{O}(r^2)
\;,
\label{asympt_rest}
\end{equation}
with
\begin{equation}
 (\nu_0)_0 > 0 \quad \text{and} \quad (\det \coneg_{\Sigma_r})_{-4}> 0
 \;.
\label{det_cond}
\end{equation}
Moreover,
\begin{eqnarray}
\tau \,\equiv\, \frac{1}{2}\overline g{}^{AB}\partial_r \overline g_{AB} \,=\,  \frac{n-1}{r} + \mathcal{O}(r^{-2})\;,
\end{eqnarray}
and (recall that $\tau = (n-1)\partial_r\log\varphi $)
\begin{eqnarray}
\varphi \,=\,\varphi_{-1}r + \mathcal{O}(1) \quad \text{for some positive function $\varphi_{-1} $ on $S^{n-1}$.}
\label{asympt_rest_phi}
\end{eqnarray}
Indeed assuming that $\varphi_{-1}$ vanishes for some $x^A$, the function $\varphi$ does not diverge as $r$ goes to infinity along some null ray $\Upsilon$ emanating from $O$,
i.e.\ $\varphi|_{\Upsilon}= \mathcal{O}(1)$ and $\det \coneg_{\Sigma_r}|_{\Upsilon} \equiv (\varphi^{2(n-1)}\det s) |_{\Upsilon} = \mathcal{O}(1)$,
 which is incompatible with~\eq{det_cond}.

The assumptions $\varphi(r,x^A)>0$ and $\varphi_{-1}(x^A)>0$ imply the non-existence of conjugate points on the light-cone up-to-and-including conformal infinity.

\subsubsection{A priori restrictions on gauge source functions}

Assume that there exists a smooth conformal completion of the  metric,
as in Definition~\ref{definition_smooth}. We wish to find the class of gauge functions
$\kappa$ and $\ol W^
\mu$ which are compatible with this asymptotic behaviour.

The relation $\overline g_{AB}= \mathcal{O}(r^2)$ together with $\partial_r = -x^2\partial_x$  and the definition \eq{definition_sigmaAB} implies
\begin{equation}
 \mbox{  $ \sigma_A{}^B=\mathcal{O}(r^{-2})$,\quad  $|\sigma|^2=\mathcal{O}(r^{-4})$. }
 \label{5XII13.2}
\end{equation}
Using the   estimate (\ref{asympt_rest_phi}) for $\tau$ and the Raychaudhuri equation \eq{constraint_tau} we find
\begin{equation}
    \kappa= \mathcal{O}(r^{-3})
 \;,
\label{a_priori_kappa}
\end{equation}
where  cancellations in both the leading and the next-to-leading terms in  \eq{constraint_tau} have been used.
Then \eq{constraint_nu0}, \eq{asympt_rest}, \eq{asympt_rest_phi} and \eq{a_priori_kappa} imply
\begin{equation}
 \ol W^0 = \mathcal{O}(r^{-1})
\;.
\label{a_priori_W0}
\end{equation}

Similarly to $\kappa = \ol \Gamma^r_{rr}$, $\xi_A$ 
corresponds to the restriction to $C_O$ of certain  connection coefficients (cf.\ \cite{CCM2,ChPaetz})
\begin{eqnarray*}
 \xi_A = -2\ol \Gamma^r_{rA}  
\;.
\end{eqnarray*}
We will use this equation to determine the asymptotic behaviour of $\xi_A$; the main point is to show that there needs to exist a gauge in which $\xi_A$ has no logarithmic terms. We note that the argument here requires assumptions about the whole space-time metric and some of its derivatives transverse to the characteristic initial surface, rather than on $\og_{AB}$.

A necessary condition for the space-time metric to  be smoothly extendable across $\scri^+$
is that
 the Christoffel symbols of the unphysical metric $\tilde g$ in coordinates $(u,x\equiv 1/r,x^A)$ are smooth at $\scri^+$,
in particular
\begin{equation}
 \ol {\tilde\Gamma}{}^x_{xA}=\mathcal{O}(1) \;. 
\label{smooth_Christoffels}
\end{equation}
The formula for the transformation of Christoffel symbols under conformal rescalings of the metric, $\tilde g = \Theta^2 g$, reads
\begin{eqnarray*}
 \tilde\Gamma^{\rho}_{\mu\nu} &=& \Gamma^{\rho}_{\mu\nu} + \frac{1}{\Theta}\left(\delta_{\nu}^{\phantom{\nu}\rho}\partial_{\mu} \Theta + \delta_{\mu}^{\phantom{\mu}\rho}\partial_{\nu} \Theta -g_{\mu\nu}g^{\rho\sigma}\partial_{\sigma} \Theta\right)
 \;,
\end{eqnarray*}
and
shows that \eq{smooth_Christoffels} is equivalent to
\begin{equation}
 \ol\Gamma^x_{xA}= \mathcal{O}(1)
\;, \quad
\text{or} \quad    \ol\Gamma^r_{rA}= \mathcal{O}(1)
\;;
\end{equation}
the second equation is obtained from the first one using the transformation law of the Christoffel symbols under the coordinate transformation $x\mapsto r\equiv 1/x$.
Hence $\xi_A=\mathcal{O}(1)$.  Inspection of the leading-order terms in \eq{eqn_nuA_general} leads now to
\begin{equation}
 \xi_A = \mathcal{O}(r^{-1})
\;.
\end{equation}
One can insert all this into  \eq{eqn_xiA}, viewed as an equation for $ \ol W^A$, to obtain
\begin{eqnarray*}
 \ol W^A =  \mathcal{O}(r^{-1})
\;.
\end{eqnarray*}
We note the formula
$$
 \zeta =\overline{ 2 g^{AB} \Gamma^r_{AB} + \tau g^{rr}}
$$
which allows one to relate $\zeta$ to the Christoffel symbols of $g$, and hence also to those of $\tilde g$.
However, when relating $\ol{\tilde\Gamma}^x_{AB}$ and $\ol\Gamma^r_{AB}$
derivatives of the conformal factor $\Theta$ appear which are transverse to the light-cone and whose expansion is a priori not clear. Therefore this formula
cannot be used to obtain information about $\zeta$ in a direct way, and one has to proceed
differently.
Assuming, from now on, that we are in space-dimension three,
it will be shown in part II of this work that the above a priori restrictions and the constraint equation \eq{zeta_constraint} \textit{imply} that the auxiliary function
$\zeta$ has the asymptotic behaviour
\begin{equation}
 \zeta=\mathcal{O}(r^{-1})
\;.
\end{equation}
It then follows from  \eq{dfn_zeta}  and \eq{asympt_rest} that
\begin{equation}
 \ol W{}^1 = \mathcal{O}(r)
  \;.
\end{equation}
This is our final  condition on the gauge functions.
To summarize, necessary conditions
for existence of both a smooth conformal completion of the metric $\ol g$  and of smooth extensions of
the connection coefficients $\ol \Gamma^1_{1A}$  are
\begin{eqnarray}
& \xi_A=\mathcal{O}(r^{-1})
\;, \quad \ol W{}^0 = \mathcal{O}(r^{-1})\;, \quad \ol W{}^A = \mathcal{O}(r^{-1})
\;. &
\end{eqnarray}
Moreover,
\begin{eqnarray}
\text{if} \enspace   \zeta=\mathcal{O}(r^{-1})\enspace \text{then} \enspace \ol W{}^1 = \mathcal{O}(r)
\;.
\end{eqnarray}

\section{Asymptotic expansions}
 \label{summary1x}

We have seen in Section~\ref{apriori_subsection} that existence of a smooth completion at null infinity requires $g_{AB}=\mathcal{O}(r^2)$   with $(\det \ol g_{AB})_{-4} > 0$, and thus $\varphi=\mathcal{O}(r)$ with $ \varphi_{-1}> 0$.
But then
\begin{equation*}
 \frac{1}{\sqrt{\det\gamma}}\gamma_{AB} = \varphi^{-2} \frac{1}{\sqrt{\det s}} g_{AB} = \mathcal{O}(1)
 \;.
\end{equation*}
Since only the conformal class of $\gamma_{AB}$ matters, we see that there is no loss of generality to assume that
$\gamma_{AB} = \mathcal{O}(r^2)$, with $(\det \gamma_{AB})_{-4}\ne 0$;
 this is   convenient because then $\gamma_{AB}$ and $\ol g_{AB}$ will display similar asymptotic behaviour.
Moreover, since any Riemannian metric on the 2-sphere is conformal to the standard metric $s=s_{AB}\mathrm{d}x^A\mathrm{d}x^B$,
in the case of smooth conformal completions we may without loss of generality require the initial data $\gamma$ to be   of
the form, for large $r$,
\begin{equation}
 \gamma_{AB} \ourdoteq  r^2 \Big( s_{AB}+ \sum_{n=1}^{\infty} \gamman _{AB} r^{-n}\Big)
 \;,
 \label{initial_data}
\end{equation}
for some smooth tensor fields $ \gamman _{AB}$ on $S^2$.
(Recall that the symbol ``$\ourdoteq $'' has been defined in Section~\ref{ss12XII13.2}.)
If the initial data $\gamma_{AB}$ are not directly of the form \eq{initial_data}, they can either be brought to  \eq{initial_data}   via an appropriate choice of coordinates and conformal rescaling, or they lead to a metric $\ol g_{\mu\nu}$ which is not smoothly extendable across~$\scri^+$.

In the second part of this work~\cite{TimAsymptotics} the following theorem will be proved:
\begin{theorem}
\label{thm_asympt_exp}
Consider the characteristic initial value problem for Einstein's vacuum field equations in four space-time dimensions with smooth
conformal data $\gamma=\gamma_{AB}\mathrm{d}x^A\mathrm{d}x^B$ and gauge  functions $\kappa$ and $\ol W^{\lambda}$ on a cone $C_O$ which has smooth closure
in the conformally completed space-time.
The following conditions are necessary-and-sufficient for the trace of the metric $g=g_{\mu\nu}\mathrm{d}x^{\mu}\mathrm{d}x^{\nu}$ on $C_O$,
obtained as  solution to Einstein's wave-map characteristic vacuum constraint equations \eq{constraint_phi} and \eq{eqn_nuA_general}-\eq{dfn_zeta},
 to admit a smooth conformal completion at infinity and for the connection coefficients $\ol \Gamma^r_{rA}$  to be smooth at $\scri^+$, in the sense of  Definition~\ref{definition_smooth}, when imposing a generalized wave-map gauge condition $H^{\lambda}=0$:
\begin{enumerate}
\item[(i)]  There exists a conformal factor so that the conformally rescaled $\gamma$  satisfies \eq{initial_data}.
\item[(ii)]The functions $\varphi$, $\nu^0$, $\varphi_{-1}$ and $(\nu_0)_0$ have no zeros on $C_O\setminus \{0\}$ and $S^2$, respectively, with the non-vanishing of $(\nu^0)_0$ being equivalent to
\begin{eqnarray}
   (\ol W{}^0)_1
    &< &  
 2(\varphi_{-1})^{-2}
 \;.
\end{eqnarray}

\item[(iii)]
    The gauge  functions satisfy  $\kappa=\mathcal{O}(r^{-3})$,
  $\overline W{}^0=\mathcal{O}(r^{-1})$, $\overline W{}^A=\mathcal{O}(r^{-1})$, $\overline W{}^1=\mathcal{O}(r)$ and, setting $\ol W_A:= \ol g_{AB}\ol W{}^A$,
\begin{eqnarray}
 \label{22XII13.11x}
 (\overline W{}^0)_2
  & = &
    \Big[\frac{1}{2} (\overline W{}^0)_1 + (\varphi_{-1})^{-2}\Big]\tau_2
 \;,
 \label{0gaugecond}
\\
 (\overline W_A)_1 &=& 4(\sigma_A{}^B)_2\mathring\nabla_A\log\varphi_{-1}
   - (\check\varphi_{-1})^{-2}[(\nu_0)_2(\overline W_A)_{-1}
+ (\nu_0)_1(\overline W_A)_0]
 \nonumber
\\
&& - \mathring\nabla_A \tau_2
 - \frac{1}{2}( w_A{}^B)_1( w_B{}^C)_1 (\overline W_C)_{-1}
 -\frac{1}{2} ( w_A{}^B)_2 (\overline W_B)_{-1}
 \nonumber
\\
 &&  - ( w_A{}^B)_1\Big[(\overline W_B)_{0}  +  (\check\varphi_{-1})^{ 2} (\nu_0)_1 (\overline W_B)_{-1}  \Big]
\label{nuA_cond}
 \;,
\\ (\overline W{}^1)_2 &=& \frac{\zeta_2}{2} + (\varphi_{-1})^{-2}\tau_2 + \frac{\tau_2}{4}\coneR_2 + \frac{\tau_2}{2} (\overline W{}^1)_1
 +   \Big[ \frac{\tau_3}{4} + \frac{\kappa_3}{2} - \frac{(\tau_2)^2}{8} \Big](\overline W{}^1)_0
 \nonumber
\\
 &&  \Big[ \frac{1}{48}(\tau_2)^3 - \frac{1}{8}\tau_2\tau_3 - \frac{1}{4}\tau_2\kappa_3 + \frac{1}{6}\tau_4 + \frac{1}{3}\kappa_4 \Big](\overline W{}^1)_{-1}
 \;,
 \label{g00_cond}
\end{eqnarray}
%
where $\mathring\nabla$ is the covariant derivative operator of the unit round metric on the sphere $s_{AB}\mathrm{d}x^A\mathrm{d}x^B$, $\check R_2$ is the $r^{-2}$-coefficient of the scalar curvature $\check R$ of the metric $\check g_{AB}\mathrm{d}x^A\mathrm{d}x^B$,
$\check\varphi_{-1} :=[(\varphi_{-1})^{-2} - \frac{1}{2}(\ol W{}^0)_1]^{-1/2}$,
 and the expansion coefficients  $( w_A{}^B)_n$ are defined using
\begin{equation*}
  w_A{}^B := \Big[ \frac{r}{2}\nu_0(\ol W{}^0 + \ol{\hat\Gamma}{}^0)-1 \Big]\delta_A{}^B +2r\chi_A{}^B
\;.
\end{equation*}
\item[(iv)]
 The  \underline{no-logs-condition}
is satisfied:
  \begin{eqnarray}
  (\sigma_A{}^B)_3  =  \tau_2 (\sigma_A{}^B)_2
 \;.
 \label{no-log-conditions}
 \end{eqnarray}
\end{enumerate}
%
\end{theorem}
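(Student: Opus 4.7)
The plan is to carry out a systematic asymptotic analysis of the hierarchical ODE system \eqref{constraint_phi}, \eqref{constraint_nu0}, \eqref{eqn_nuA_general}--\eqref{eqn_xiA}, \eqref{zeta_constraint}--\eqref{dfn_zeta}, solving order by order in powers of $r^{-1}$ and tracking the possible appearance of $\log r$-terms at each stage. Each constraint is, after the reformulations of Section~\ref{cauchy_problem}, a linear ODE in $r$, so indicial analysis at infinity predicts exactly the orders at which a resonance would force a $\log r$-term into the solution. Necessity of \eqref{0gaugecond}--\eqref{g00_cond} and \eqref{no-log-conditions} will be obtained by assuming the smooth completion exists, expanding the constraints, and reading off the resonance-cancellation conditions; sufficiency by showing that once these conditions are imposed the inductive solution produces a purely polynomial expansion in $r^{-1}$.

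First I would treat \eqref{constraint_phi} for $\varphi$: under the a priori restrictions $\kappa=\mathcal{O}(r^{-3})$ and $|\sigma|^2=\mathcal{O}(r^{-4})$ established in Section~\ref{apriorirestrictions}, the indicial roots at $r=\infty$ are non-resonant for the relevant source, so $\varphi$ admits a log-free expansion whose coefficients $\varphi_n$ (hence $\tau_{n+2}$) are determined recursively by the initial data and $\kappa$. Feeding this into \eqref{constraint_nu0}, which has a resonant indicial root at the order governing $\nu^0$ at $r^{-2}$, one finds that a $r^{-2}\log r$-term in $\nu^0$ is produced with coefficient proportional to $[\frac{1}{2}(\overline W{}^0)_1+(\varphi_{-1})^{-2}]\tau_2-(\overline W{}^0)_2$; its vanishing is exactly \eqref{0gaugecond}, generalizing the single-term obstruction identified in Theorem~\ref{T9XII13.11}. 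The same strategy applied to \eqref{eqn_nuA_general}--\eqref{eqn_xiA} identifies a resonance at the order corresponding to $\nu_A$ at $r^0$, whose cancellation fixes $\overline W_A$ as in \eqref{nuA_cond}; analogously \eqref{dfn_zeta}--\eqref{zeta_constraint} force \eqref{g00_cond}. The bookkeeping here is tedious but mechanical: one substitutes the expansions of the previously solved variables into the source of each ODE and extracts the coefficient of the resonant power of $r$.

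The main obstacle, and the genuine content of the theorem, is condition \eqref{no-log-conditions}. After applying every gauge cancellation available from \eqref{0gaugecond}--\eqref{g00_cond}, a $\log r$-term at the $r^{-3}$-level still threatens in the expansion of the trace-free part of $\chi_A{}^B$, sourced by the trace-free evolution equation for $\chi_A{}^B$ that follows from expanding the vacuum equations $\overline R_{AB}=0$. Computing this resonant coefficient explicitly in terms of the quantities already determined must yield a multiple of $(\sigma_A{}^B)_3-\tau_2(\sigma_A{}^B)_2$, which is the statement of \eqref{no-log-conditions}. The delicate point is to establish that this obstruction is truly gauge-invariant: one must show that under any allowed reparameterization $r\mapsto\overline r(r,x^A)$ satisfying the transformation rules \eqref{17V14.5}--\eqref{17V14.6} and the decay rates of Section~\ref{apriorirestrictions}, and under any admissible rechoice of $\overline W^\lambda$ compatible with \eqref{0gaugecond}--\eqref{g00_cond}, the combination $(\sigma_A{}^B)_3-\tau_2(\sigma_A{}^B)_2$ is preserved up to an overall nonvanishing factor. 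This reduces to checking the transformation laws at two consecutive orders in the expansion and observing that the particular linear combination on the right of \eqref{no-log-conditions} is the unique invariant on the two-dimensional space of $r^{-2}$- and $r^{-3}$-coefficients of the shear.

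For sufficiency I would argue by induction on the order of the expansion: assuming (i)--(iv), resolution of the next-order coefficients in each of the four hierarchical ODEs encounters no uncanceled resonance, so every coefficient is smooth in $x^A$ and the formal Taylor series of the rescaled trace $x^2\overline g_{\mu\nu}$ at $\{x=0\}$ exists to all orders. Combined with the non-vanishing conditions in (ii), which keep $\det\overline g_{AB}|_{-4}$ bounded away from zero and $(\nu_0)_0$ positive, this yields the smooth conformal completion in the sense of Definition~\ref{definition_smooth}; smoothness of the connection coefficients $\overline\Gamma^r_{rA}$ then follows from the expansions of $\xi_A$ and of the inverse metric components. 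The no-go Theorem~\ref{T9XII13.11} is recovered as the degenerate case where the gauge conditions \eqref{0gaugecond}--\eqref{g00_cond} are incompatible with $\kappa=0$ and $\overline W{}^0=0$ unless $\tau_2=0$, which by Lemma~\ref{L9XII13.1} forces $\sigma\equiv 0$ and hence flatness.
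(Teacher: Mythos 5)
First, a point of comparison: the present paper does not actually prove Theorem~\ref{thm_asympt_exp}; the statement is announced here and its proof is explicitly deferred to part~II \cite{TimAsymptotics}. What this paper does supply are the ingredients your outline correctly reuses: the a priori decay rates of Section~\ref{apriorirestrictions}, the expansions of $\varphi$ and $\nu^0$ (Proposition~\ref{P6XII13.2} and the computation in the proof of Theorem~\ref{T21IV11.1}, which is precisely your resonance calculation leading to \eq{0gaugecond}), and the gauge-invariance of \eq{no-log-conditions} (Proposition~\ref{P11XII13.1}, which your invariance check essentially reproduces). Your overall strategy --- integrate the hierarchical linear ODEs order by order and cancel the resonant coefficient in each --- is the strategy the authors indicate is carried out in part~II, and your formula for the log coefficient in $\nu^0$ is consistent with \eq{0gaugecond}.

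The genuine gap is in your derivation of the central condition \eq{no-log-conditions}. You locate the obstruction ``in the expansion of the trace-free part of $\chi_A{}^B$, sourced by the trace-free evolution equation for $\chi_A{}^B$ that follows from expanding the vacuum equations $\ol R_{AB}=0$''. This cannot work as stated: the trace-free part of $\ol R_{AB}$ is not one of the characteristic constraint equations --- it involves derivatives of the metric transverse to $C_O$ and is not computable from the data on the cone --- and $\sigma_A{}^B$ is itself free data (it depends only on the conformal class $[\gamma]$), so it carries no logarithmic obstruction. The paper states explicitly (Section~\ref{sec_no_log}) that \eq{no-log-conditions} arises from integrating the $\xi_A$-equation \eq{eqn_nuA_general}, i.e.\ from $\ol R_{rA}=0$: with $\tau=2r^{-1}+\tau_2 r^{-2}+\dots$ the integrating factor behaves like $r^2$, and the resonance is the $r^{-3}$-coefficient of the source $\conenabla_B\sigma_A{}^B-\tfrac12\partial_A\tau-\partial_A\kappa$, which reduces to \eq{no-log-conditions}; the log terms then propagate into $\nu_A$ and $\ol\Gamma{}^r_{rA}$. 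The identity $(\partial_r-\kappa)\chi_A{}^B+\chi_A{}^C\chi_C{}^B=-\ol C_{rAr}{}^B$ that you seem to have in mind is the \emph{geometric interpretation} of Section~\ref{no-logs_geom}, not the derivation. A smaller slip: \eq{g00_cond} does not come from a resonance in the $\zeta$-equation \eq{zeta_constraint} --- the paper notes that no new logs arise there --- but from reading \eq{dfn_zeta} as an ODE for $\ol g{}^{11}$ once $\zeta$ is known. Your sufficiency induction and the recovery of Theorem~\ref{T9XII13.11} from \eq{0gaugecond} with $\ol W{}^0=0$ via Lemma~\ref{L9XII13.1} are sound.
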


\begin{Remark}
{\rm
If any of the equations \eq{22XII13.11x}-\eq{no-log-conditions} fail to hold, the resulting characteristic initial data sets will have a \emph{polyhomogeneous} expansion in terms of powers of $r$.
}
\end{Remark}

\begin{Remark}
{\rm
Theorem~\ref{thm_asympt_exp} is independent of the particular setting used (and remains also valid when the light-cone is replaced by one of two transversally intersecting null hypersurfaces meeting $\scrip$ in a sphere), cf.\ Section~\ref{s16XII13.1}:
As long as the generalized wave-map gauge condition is imposed one can always compute  $\ol W^{\lambda}$, $\tau$, $\sigma$ etc.\
and check the validity of \eq{0gaugecond}-\eq{no-log-conditions}, whatever the prescribed initial data sets are.
Some care is needed when the Minkowski target is replaced by some other target metric, cf.\ \cite{TimAsymptotics}.
}
\end{Remark}

All the conditions in (ii) and (iii) which involve $\kappa$ or $\overline W{}^{\lambda}$ can always be satisfied by an appropriate choice of coordinates.
Equivalently, those logarithmic terms which appear if these conditions are not satisfied are pure gauge artifacts.

Recall that to solve the equation for $\xi_A$ both $\kappa$ and $\varphi$ need to be known. This requires a choice of the $\kappa$-gauge. Since the choice of $\overline W{}^0$ does not affect the $\xi_A$-equation, there is no gauge-freedom left in that equation  and if the no-logs-condition \eq{no-log-conditions} does not hold there is no possibility to get rid of the log terms that arise in this equation. (In Section~\ref{sec_no_log} we will return to the question, whether \eq{no-log-conditions} can be satisfied by a choice of $\kappa$.)
Similarly there is no gauge-freedom left when the equation for $\zeta$ is integrated but, due to the special structure of the asymptotic expansion of its source term, no new log terms arise in the expansion of $\zeta$.

The no-logs-condition involves two  functions, $\varphi_{-1}$ and $\varphi_0$, which are globally determined by the gauge function $\kappa$ and the initial data $\gamma$, cf.\  \eq{constraint_phi}.
The dependence of these functions on the gauge and on the initial data is rather intricate.
Thus the question arises for which class of initial data one can find a  function $\kappa=\mathcal{O}(r^{-3})$, such that the no-logs-condition
holds, and accordingly what the geometric restrictions are for this to be possible.
This issue will be analysed in   part II of this work,
using  a gauge scheme  adjusted to the initial data so that all relevant globally defined integration functions can be computed explicitly.


\section{The no-logs-condition}
\label{sec_no_log}

\subsection{Gauge-independence}


In this section we show gauge-independence of \eq{no-log-conditions}.
It is shown in paper II \cite{TimAsymptotics} that \eq{no-log-conditions} arises from
integration of the equation for $\xi_A$, which is independent of the gauge functions $W^{\mu}$. Equation \eq{no-log-conditions} is therefore independent of those functions, as well. So the only relevant freedom is that of rescaling the $r$-coordinate parameterizing the null rays.
We therefore need to compute how (\ref{no-log-conditions}) transforms under rescalings of $r$.
For this we consider a  smooth
 coordinate transformation
\begin{equation}
r\mapsto \tilde r=\tilde r(r,x^A)
\;.
\label{rescaling_r}
\end{equation}
Under \eq{rescaling_r} the function $\varphi$ transforms as a scalar.
We have seen above that a necessary condition for the metric to be smoothly extendable across $\scri^+$ is that $\varphi$ has the asymptotic
behaviour
\begin{equation}
\varphi(r,x^A) \,=\, \varphi_{-1}(x^A) r +\varphi_0 +  \mathcal{O}(r^{-1})\;, \quad \text{with} \quad \varphi_{-1}>0
\;.
\label{asympt_phi}
\end{equation}
The transformed $\varphi$ thus takes the form
\begin{eqnarray*}
\tilde\varphi(\tilde r,x^A) &:=&\varphi(r(\tilde r),x^A)\,=\, \varphi_{-1}(x^A) r(\tilde r) +\varphi_0 +  O(r(\tilde r)^{-1})
\;,
\\
\partial_{\tilde r}\tilde\varphi(\tilde r,x^A) &=& \frac{\partial r}{\partial \tilde r}\partial_r\varphi(r(\tilde r),x^A)\,=\,
\frac{\partial r}{\partial \tilde r}\varphi_{-1}(x^A) r(\tilde r) +  \frac{\partial r}{\partial \tilde r}O(r(\tilde r)^{-2})
\;.
\end{eqnarray*}
If we require $\tilde \varphi$ to be of the form \eq{asympt_phi} as well, it is easy to check that we must have
\begin{eqnarray}
 r(\tilde r, x^A) &=& r_{-1}(x^A)\tilde r + r_0 + O(\tilde r^{-1}) \quad \text{and}
\label{asympt_coord_trafo}
\\
 \partial_{\tilde r}r(\tilde r, x^A) &=& r_{-1}(x^A)  + O(\tilde r^{-2})\;, \quad \text{with} \quad r_{-1}>0
\;.
\label{asympt_coord_trafo2}
\end{eqnarray}

We have:

\begin{Proposition}
  \label{P11XII13.1}
 The no-logs-condition (\ref{no-log-conditions}) is invariant under the coordinate transformations \eq{asympt_coord_trafo}-\eq{asympt_coord_trafo2}.
\end{Proposition}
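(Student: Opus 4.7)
The plan is to verify invariance by a direct bookkeeping of how each object in the no-logs-condition transforms under a rescaling of the form \eqref{asympt_coord_trafo}--\eqref{asympt_coord_trafo2}, and checking that the two sides pick up exactly the same factor.

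First I would recall from \eqref{17V14.6}--\eqref{17V14.7} that $\ol g_{AB}$ transforms as a scalar under $r\mapsto\tilde r(r,x^A)$, so the inverse $\gamma^{BC}$ (or $\ol g^{BC}$) is a scalar as well, whereas $\tau$ and $\sigma_{AB}$ are one-forms in $r$. Consequently $\sigma_A{}^B = \gamma^{BC}\chi_{AC}-\tfrac{1}{n-1}\delta_A{}^B\tau$ is again a one-form, i.e.
\begin{equation}
  \tilde\sigma_A{}^B(\tilde r,x^C) \,=\, \frac{\partial r}{\partial\tilde r}\,\sigma_A{}^B(r(\tilde r,x^C),x^C)\;,
  \qquad
  \tilde\tau(\tilde r,x^C) \,=\, \frac{\partial r}{\partial\tilde r}\,\tau(r(\tilde r,x^C),x^C)\;.
\end{equation}

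Next I would plug in the asymptotic expansions \eqref{asympt_coord_trafo}--\eqref{asympt_coord_trafo2}, which give
\begin{equation}
  r^{-1} \,=\, r_{-1}^{-1}\tilde r^{-1} - r_{-1}^{-2} r_0\,\tilde r^{-2} + O(\tilde r^{-3})\;,\quad
  \frac{\partial r}{\partial\tilde r} \,=\, r_{-1} + O(\tilde r^{-2})\;,
\end{equation}
and read off the expansion coefficients on each side. Using $\tau = 2r^{-1}+\tau_2 r^{-2}+O(r^{-3})$ and $\sigma_A{}^B = (\sigma_A{}^B)_2 r^{-2}+(\sigma_A{}^B)_3 r^{-3}+O(r^{-4})$ (valid in space-dimension three, cf.\ \eqref{5XII13.2} and Section~\ref{apriorirestrictions}), a short computation yields
\begin{align}
  \tilde\tau_2 &= r_{-1}^{-1}\bigl(\tau_2 - 2 r_0\bigr)\;,\\
  (\tilde\sigma_A{}^B)_2 &= r_{-1}^{-1}(\sigma_A{}^B)_2\;,\\
  (\tilde\sigma_A{}^B)_3 &= r_{-1}^{-2}\bigl((\sigma_A{}^B)_3 - 2 r_0\,(\sigma_A{}^B)_2\bigr)\;.
\end{align}

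Finally I would insert the no-logs-condition $(\sigma_A{}^B)_3=\tau_2(\sigma_A{}^B)_2$ into the formula for $(\tilde\sigma_A{}^B)_3$, obtaining
\begin{equation}
  (\tilde\sigma_A{}^B)_3 \,=\, r_{-1}^{-2}(\sigma_A{}^B)_2\bigl(\tau_2-2 r_0\bigr) \,=\, \tilde\tau_2\,(\tilde\sigma_A{}^B)_2\;,
\end{equation}
which is the no-logs-condition in the tilded coordinates; the converse implication is identical since $r_{-1}>0$. The only mildly subtle step is justifying that the higher-order remainders in \eqref{asympt_coord_trafo2} do not contaminate the $\tilde r^{-3}$ coefficient --- but since $\frac{\partial r}{\partial\tilde r}-r_{-1}=O(\tilde r^{-2})$ and $\sigma_A{}^B=O(r^{-2})$, their product is $O(\tilde r^{-4})$ and therefore harmless.
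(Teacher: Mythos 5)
Your proposal is correct and follows essentially the same route as the paper: transform $\tau_2$, $(\sigma_A{}^B)_2$ and $(\sigma_A{}^B)_3$ under \eq{asympt_coord_trafo}--\eq{asympt_coord_trafo2} and check that $(\sigma_A{}^B)_3-\tau_2(\sigma_A{}^B)_2$ rescales by the positive factor $(r_{-1})^{-2}$. The only cosmetic difference is that the paper obtains $\tilde\tau_2=r_{-1}^{-1}(\tau_2-2r_0)$ via the scalar transformation of $\varphi$ and the relation $\tau_2=-2\varphi_0/\varphi_{-1}$, whereas you read it off directly from the one-form law \eq{17V14.6}; the coefficients agree.
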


\proof
For the transformation behavior of the expansion coefficients we obtain
\begin{eqnarray*}
 &\varphi_{-1} \,=\, (r_{-1})^{-1}\tilde\varphi_{-1}\;, \quad \varphi_0 \,=\,\tilde  \varphi_0 - r_0(r_{-1})^{-1}\tilde\varphi_{-1}&
\\
&\Longrightarrow \quad \tau_2 \,=\, -2(\varphi_{-1})^{-1}\varphi_0 \,=\,  r_{-1}\tilde\tau_2
+2 r_0
 \;.&
\end{eqnarray*}
Moreover, with \eq{asympt_coord_trafo}-\eq{asympt_coord_trafo2}
we find
\begin{eqnarray*}
 \tilde\sigma_A{}^B &=& \frac{\partial r}{\partial \tilde r} \sigma_A{}^B =\left[ r_{-1} + O(\tilde r^{-2})  \right]\left[ (\sigma_A{}^B)_2r(\tilde r)^{-2} + (\sigma_A{}^B)_3 r(\tilde r)^{-3} + \mathcal{O}(r(\tilde r)^{-4}) \right]
\\
 &=& (r_{-1})^{-1} (\sigma_A{}^B)_2\tilde r^{-2} + \left[ (r_{-1})^{-2}(\sigma_A{}^B)_3 - 2r_0(r_{-1})^{-2}(\sigma_A{}^B)_2 \right]\tilde r^{-3}
+ O(\tilde r^{-4})
\\
 \Longrightarrow && (\sigma_A{}^B)_2 = r_{-1} (\tilde \sigma_A{}^B)_2\;,
\\
&&  (\sigma_A{}^B)_3 = (r_{-1})^2(\tilde\sigma_A{}^B)_3 + 2r_0r_{-1}(\tilde \sigma_A{}^B)_2
 \;.
\end{eqnarray*}
Hence
\begin{equation*}
 (\sigma_A{}^B)_3- \tau_2 (\sigma_A{}^B)_2
=
 (r_{-1})^2[ (\tilde\sigma_A{}^B)_3 - \tilde\tau_2  (\tilde\sigma_A{}^B)_2 ]
 \;.
\end{equation*}
\qed

Although the No-Go Theorem~\ref{T9XII13.11} shows that the ($\kappa=0$, $\ol W^\lambda =0$)-wave-map gauge invariably produces logarithmic terms except in the flat case,
one can decide whether the logarithmic terms can be transformed away by checking  \eq{no-log-conditions}  using this gauge, or in fact any other.
In the $([\gamma],\kappa)$ scheme this requires to determine $\tau_2$ by solving the  Raychaudhuri equation, which makes this scheme not practical for the purpose. In particular, it is not a priori clear within this scheme whether \emph{any} initial data satisfying this condition exist unless both $(\sigma^A{}_B)_2$ and $(\sigma^A{}_B)_3$ vanish.
On the other hand,
in any  gauge scheme where  $\check g$
is prescribed on the cone, the no-logs-condition \eq{no-log-conditions} is a straightforward condition on the asymptotic
behaviour of the metric.

Let us  assume that \eq{no-log-conditions} is violated for say $\kappa=0$. We know that the metric cannot have a smooth conformal completion at infinity in an adapted null coordinate system arising from the $  \kappa=0$-gauge via a transformation which \textit{is not} of the asymptotic form  \eq{asympt_coord_trafo}-\eq{asympt_coord_trafo2}.
On the other hand if the transformation \textit{is} of the form  \eq{asympt_coord_trafo}, then the no-logs-condition will also be violated in the new coordinates. We conclude that we cannot have a smooth conformal completion in \emph{any} adapted null coordinate system. That yields

\begin{theorem}
 \label{T3III14.1}
Consider initial data $\gamma$ on a light-cone $C_O$ in a $ \kappa=0$-gauge
with asymptotic behaviour  $\gamma_{AB}\sim  r^2 ( s_{AB}+ \sum_{n=1}^{\infty} h^{(n)}_{AB} r^{-n})$.
Assume that $\varphi$, $\nu^0$ and $\varphi_{-1}$ are strictly positive on $C_O\setminus\{O\}$ and $S^2$, respectively.
Then there exist 
a gauge w.r.t.\ which the trace $\ol g$ of the metric on the cone admits a smooth conformal completion at infinity
 and where the connection coefficients $\ol \Gamma^r_{rA}$ are smooth at $\scrip$
 (in the sense of  Definition~\ref{definition_smooth})
if and only if  the no-logs-condition \eq{no-log-conditions} holds
in one (and then any)  coordinate system related to the original one by a coordinate transformation of the form \eq{asympt_coord_trafo}-\eq{asympt_coord_trafo2}.
\end{theorem}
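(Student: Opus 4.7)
The proof naturally splits into the two implications; both will draw on Theorem~\ref{thm_asympt_exp} and Proposition~\ref{P11XII13.1}, together with the observation already isolated in Section~\ref{sec_no_log} that the no-logs-condition depends only on the parameterization of the null generators and is insensitive to the gauge source functions $\overline W^{\lambda}$. For orientation, in both directions the task is to toggle between the original $\kappa=0$ coordinates on $C_O$ and an auxiliary coordinate system in which Theorem~\ref{thm_asympt_exp} applies directly.

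\textbf{Necessity.} Assume there exist a gauge and adapted null coordinates $(\tilde r,x^A)$ in which $\overline g$ admits a smooth conformal completion at $\scrip$ and $\overline\Gamma^r_{rA}$ is smooth there. Condition (iv) of Theorem~\ref{thm_asympt_exp} then yields the no-logs-condition \eqref{no-log-conditions} in the tilde-gauge. By the a priori restrictions of Section~\ref{apriorirestrictions}, one has $\tilde\varphi=\tilde\varphi_{-1}\tilde r+O(1)$ with $\tilde\varphi_{-1}>0$ in the tilde-gauge, while the standing hypothesis supplies the same asymptotic form for $\varphi$ in the original $\kappa=0$ gauge. The argument preceding Proposition~\ref{P11XII13.1} forces the change of parameter along each generator, $\tilde r\leftrightarrow r$, to have exactly the asymptotic shape \eqref{asympt_coord_trafo}-\eqref{asympt_coord_trafo2}. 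Proposition~\ref{P11XII13.1} therefore transports \eqref{no-log-conditions} back to the original $\kappa=0$ coordinates, and, applied a second time, propagates it to every coordinate system related to those by a further transformation of the same type.

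\textbf{Sufficiency.} Suppose now that \eqref{no-log-conditions} holds in some coordinate system of the admissible form. By Proposition~\ref{P11XII13.1} it then holds in the original $\kappa=0$ coordinates. I will build the desired gauge while \emph{keeping} $\kappa=0$, so that the no-logs-condition is preserved, and shall choose the $\overline W^\lambda$'s to realise the other items of Theorem~\ref{thm_asympt_exp}. Condition (i) is achieved by a conformal rescaling of $\gamma$, which alters neither $\kappa$ nor the no-logs-condition. For condition (ii), positivity of $\varphi$, $\nu^0$ and $\varphi_{-1}$ is part of the hypothesis, and non-vanishing of $(\nu^0)_0$ is arranged by imposing $(\overline W^0)_1<2(\varphi_{-1})^{-2}$ (for instance by taking $\overline W^0=O(r^{-2})$). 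Condition (iii) is a finite list of algebraic prescriptions for the coefficients $(\overline W^0)_2$, $(\overline W_A)_1$ and $(\overline W^1)_2$ in terms of quantities explicitly computable from the initial data, so compatible $\overline W^{\lambda}$ of the correct a priori decay can be assembled coefficient by coefficient. Once (i)-(iv) hold, Theorem~\ref{thm_asympt_exp} furnishes the smooth conformal completion of $\overline g$ together with the smoothness of $\overline\Gamma^r_{rA}$ at $\scrip$.

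\textbf{Main obstacle.} The delicate point is the necessity direction: one has to rule out coordinate transformations that preserve $\varphi_{-1}>0$ but exit the class \eqref{asympt_coord_trafo}-\eqref{asympt_coord_trafo2} (say with $o(r)$ rather than $O(1)$ subleading corrections). This requires the full strength of the a priori analysis of Section~\ref{apriorirestrictions}, which pins down not merely the leading behaviour of $\tilde\varphi$ but also the next order, and hence constrains the relation $r=r(\tilde r)$ to two orders. A secondary subtlety in the sufficiency direction is to check that the algebraic conditions in Theorem~\ref{thm_asympt_exp}(iii) are mutually compatible with the inequality $(\overline W^0)_1<2(\varphi_{-1})^{-2}$; this however reduces to choosing $(\overline W^0)_1$ in an open set and then reading off the higher coefficients from the prescriptions, and so poses no real difficulty.
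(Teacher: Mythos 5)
Your argument is correct and is essentially the paper's own: the paper proves the theorem by exactly the same combination of Theorem~\ref{thm_asympt_exp}, the a priori observation that any reparameterization compatible with a smooth completion must be of the form \eq{asympt_coord_trafo}--\eq{asympt_coord_trafo2} (since both $\varphi$ and $\tilde\varphi$ must have the expansion \eq{asympt_phi}), and the gauge-invariance Proposition~\ref{P11XII13.1} — the only difference being that the paper states the necessity direction contrapositively and leaves the sufficiency (choice of the $\ol W^\lambda$ to realise conditions (i)--(iii)) implicit. Your version is a slightly more explicit rendering of the same proof.
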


\subsection{Geometric interpretation}
\label{no-logs_geom}

Here we provide a geometric interpretation
of the no-logs-condition  \eq{no-log-conditions} in terms of the conformal Weyl tensor. This ties our results with the analysis in~\cite{andersson:chrusciel:PRL} (compare also Section~\ref{ss16XII13.5}).

For this purpose let us consider the components of the conformal Weyl tensor,  $ C_{rAr}{}^B$, on the cone.
To end up with smooth initial data for the conformal fields equations
we need to require its rescaled counterpart $\overline {\tilde d}_{rAr}{}^B = \ol \Theta^{-1} \ol {\tilde C}_{rAr}{}^B = \ol \Theta^{-1} \ol C_{rAr}{}^B$ to be smooth at $\scri^+$, which is equivalent to
\begin{equation}
 \overline C_{rAr}{}^B = \mathcal{O}(r^{-5})
 \;.
\label{asympt_beh_Weyl}
\end{equation}
In particular the $\ol C_{rAr}{}^B $-components of the Weyl tensor need to vanish one order faster than naively expected from the asymptotic behavior of the metric.
In adapted null coordinates and in vacuum we have, using the formulae of \cite[Appendix~A]{CCM2},
\begin{eqnarray*}
 \ol C_{rAr}{}^B  &=&   \ol R_{rAr}{}^B \,=\,  -\partial_r\ol \Gamma^B_{rA} + \ol\Gamma^B_{rA}\ol \Gamma^r_{rr}
- \ol \Gamma^B_{rC}\ol\Gamma^C_{rA}
\\
&=&   -(\partial_r-\kappa)\chi_A{}^B
 - \chi_A{}^C\chi_C{}^B
\\
&=&   -\frac{1}{2}(\partial_r\tau-\kappa\tau  + \frac{1}{2}\tau^2 ) \delta_A{}^B
 -(\partial_r + \tau -\kappa)\sigma_A{}^B
 - \sigma_A{}^C\sigma_C{}^B
\\
&=& \frac{1}{2}|\sigma|^2\delta_A{}^B  -(\partial_r+\tau -\kappa)\sigma_A{}^B
 - \sigma_A{}^C\sigma_C{}^B
\;.
\end{eqnarray*}
Assuming, for definiteness, that $\kappa=\mathcal{O}(r^{-3})$ and $\ol g_{AB} = \mathcal{O}(r^2)$ with $(\det \ol g_{AB})_{-4}>0$
we  have
\begin{eqnarray*}
 \ol C_{rAr}{}^B  &=&    \Big( (\sigma_A{}^B)_3  - \tau_2 (\sigma_A{}^B)_2 + \frac{1}{2} (\sigma_C{}^D)_2(\sigma_D{}^C)_2\delta_A{}^B
- (\sigma_A{}^C)_2(\sigma_C{}^B)_2\Big) r^{-4}
\\
&& + \mathcal{O}(r^{-5})
\;.
\end{eqnarray*}
 As an $s$-symmetric, trace-free tensor $(\sigma_A{}^{C})_2$ has the property
\begin{equation*}
     (\sigma_A{}^{C})_2(\sigma_C{}^{B})_2 =   \frac{1}{2}(\sigma_D{}^{C})_2(\sigma_C{}^{D})_2\delta_A{}^{B}
\;,
\end{equation*}
i.e.
\begin{eqnarray*}
 \ol C_{rAr}{}^B  &=&    \big[ (\sigma_A{}^B)_3  - \tau_2 (\sigma_A{}^B)_2 \big] r^{-4} + \mathcal{O}(r^{-5})
\;,
\end{eqnarray*}
and \eq{asympt_beh_Weyl}  holds if and only if the no-logs-condition is satisfied.

\section{Other settings}
 \label{s16XII13.1}

We pass now to the discussion, how   to modify the above when other data sets are given, or Cauchy problems other than a light-cone are considered.

\subsection{Prescribed $(\check g_{AB},\kappa)$}
 \label{ss16XII13.1}

In this setting the initial data are a symmetric degenerate twice-covariant tensor field $\check g$, and a connection $\kappa$ on the family of bundles tangent to the integral curves of the kernel of $\check g$, satisfying the Raychaudhuri constraint \eq{constraint_tau}.

Recall that so far we have mainly been considering a characteristic Cauchy problem where $([\gamma], \kappa)$ are given.  There  \eq{constraint_phi} was used to solve for the conformal factor relating $\check g$ and $\gamma$:
\bel{16XII13.1}
 \check g \equiv \overline g_{AB} \mathrm{d}x^A\mathrm{d}x^B = \varphi^2 \big( \frac{\det s_{CD}}{\det \gamma_{EF}}\big)^{\frac{1}{n-1}}
 \gamma_{AB} \mathrm{d}x^A \mathrm{d}x^B
 \;.
\ee
But then a pair $(\check g,\kappa)$ satisfying  \eq{constraint_tau} is
obtained.
So, in fact, prescribing the pair $(\check g,\kappa)$ satisfying  \eq{constraint_tau} can be viewed as a special case of the $([\gamma], \kappa)$-prescription, where one sets $ \gamma:=\check g$.
Indeed, when $\check g$ and $\kappa$ are suitably regular at the vertex, uniqueness of solutions of \eq{constraint_phi} with the boundary conditions $\varphi(0)=0$ and $\partial_r\varphi(0)=1$ shows that %
\bel{18XII13.1}
 \varphi=  \big( \frac{\det \ol g_{EF}}{\det s_{CD}}\big)^{\frac{1}{2(n-1)}}
 \quad
 \Longleftrightarrow
 \quad
 \ol g_{AB}\equiv \gamma_{AB}
 \quad
 \Longleftrightarrow
 \quad
 \check g \equiv \gamma
 \;.
\ee
In particular all the results so far apply to this case.

If $\tau$ is nowhere vanishing, as necessary for a smooth null-geodesically complete light-cone extending to null infinity, then \eq{constraint_tau} can be algebraically solved for $\kappa$, so that the constraint becomes trivial.

\subsection{Prescribed $(\overline g_{\mu\nu},\kappa)$}
 \label{ss16XII13.2}

In this approach one prescribes all   metric functions  $\overline g_{\mu\nu}$ on the initial characteristic hypersurface, together with the connection
coefficient $\kappa$, subject to the Raychaudhuri equation \eq{constraint_tau}.
 \Eq{R11_constraint} relating $\kappa$ and $\nu_0$ becomes an algebraic equation for the gauge-source function $\ol W{}^0$, while the equations $\ol R_{r A}=0=\ol g^{AB}\ol R_{AB}$ become algebraic equations for  $\ol W^A$ and $\ol W{}^r$.

In four space-time dimensions, a smooth conformal completion at null infinity will exist
if and only if $r^{-2}\ol g_{\mu\nu}$  can be smoothly extended as a Lorentzian metric across $\scri^+$ and no logarithmic terms appear in the asymptotic expansion of
 $\ol \Gamma^r_{rA}$;
this last fact is equivalent to   \eq{no-log-conditions}. To see this,
note that since the equations for $\ol W^{\mu}$ are algebraic,
no log terms arise in these fields as long as no log terms appear in the remaining fields appearing in the constraint equation.
Similarly no log terms arise in the $\zeta$-equation.
The only possible source of log terms is thus the $\xi_A$-equation, and the appearance of log terms there is excluded precisely by the no-logs-condition. The existence of an associated space-time with a ``piece of smooth $\scrip$'' follows then from the analysis of the initial data for Friedrich's conformal equations in part II of this work, together with the analysis in \cite{CPW}.

We conclude that \eq{no-log-conditions} is again a necessary-and-sufficient condition for existence of a smooth $\scrip$ for the current scheme in space-time dimension four.

\subsection{Frame components of $\sigma$ as free data}
 \label{ss16XII13.4}

In this section we consider as free data the components $\chi_{ab}$ in an adapted parallel-propagated frame as in~\cite[Section~5.6]{ChPaetz}.
We will assume that
\bel{22XII13.1}
 \chi^a{}_b = \frac 1 r \delta^ a{}_b + \mathcal{O}(r^{-2})
  \;,
  \quad
  a,b \in\{2,3\}
 \;.
\ee
There are actually at least two schemes which would lead to this form of $ \chi^a{}_b $:
One can e.g.\ prescribe any $ \chi^a{}_b $ satisfying
\eq{22XII13.1} such that $\chi^2{}_2+\chi^3{}_3= \chi_{22}+\chi_{33}$ has no zeros, define $\sigma_{ab}=\chi_{ab}- \frac 12 (\chi^2{}_2+\chi^3{}_3) \delta_{ab}$,
and solve algebraically the
Raychaudhuri equation for $\kappa$.
 Another possibility is to prescribe directly a symmetric trace-free tensor $\sigma_{ab}$
in the   $\kappa=0$ gauge, use the Raychaudhuri equation to determine $\tau$, and construct $\chi_{ab}$   using
\bel{22XII13.2}
 \chi^{a}{}_{b}= \frac \tau {2} \delta^ a{}_b + \sigma^ a{}_b
  \;,
  \quad
  a,b \in\{2,3\}
 \;.
\ee
The asymptotics \eq{22XII13.1} will then hold if $\sigma^ a{}_b$ is taken to be $ \mathcal{O}(r^{-2})$.

Given $\chi_{ab}$, the tensor field $\check g$ is obtained by setting
\bel{22XII13.3}
 \check g = \big
  (\theta^2{}_A  \theta^2{}_B
 +  \theta^3{}_A  \theta^3{}_B
  \big)
    \mathrm{d}x^ A \mathrm{d} x^B
  \;,
\ee
where the co-frame coefficients $
  \theta^a{}_A$ are solutions of the equation~\cite{ChPaetz}
\bel{22XII13.4-}
   \partial_r   \theta^a{}_A  = \chi^a{}_b  \theta^b{}_A
  \;,
  \quad
  a,b \in \{2,3\}
  \;.
\ee

Assuming \eq{22XII13.1}, one finds that solutions of \eq{22XII13.4-} have an asymptotic expansion for $\theta^a{}_A$ without log terms:
\bel{22XII13.4+}
 \theta^a{}_A = r \varphi^a{}_A + \mathcal{O}(1)
  \;,
  \quad
  a,b \in \{2,3\}
\ee
for some globally determined functions $\varphi^a{}_A$. If the determinant of the two-by-two matrix $(  \varphi^a{}_A )$ does not vanish, one obtains a tensor field $\check g$ to which our previous considerations apply. This leads again to the no-logs-condition (\ref{no-log-conditions}).

Writing, as usual,
\bel{22XII13.5}
 \sigma_{ab} = (\sigma_{ab})_{2} r^{-2} +
 (\sigma_{ab})_{3} r^{-3} + \mathcal{O}(r^{-4})
  \;,
  \quad
  a,b \in \{2,3\}
  \;,
\ee
the no-logs-condition rewritten in terms of $\sigma_{ab}$ reads
\bel{22XII13.6}
   (\sigma_{ab})_3  =  \tau_2(\sigma_{ab})_2
    \;,
  \quad
  a,b \in \{2,3\}
  \;.
\ee

\subsection{Frame components of the Weyl tensor  as free data}
 \label{ss16XII13.5}

Let $C_{\alpha \beta \gamma \delta}$ denote the space-time Weyl tensor.
For $a,b\ge 2$ let
\begin{equation*}
 \psi_{ab} := e_a{}^A e_b {}^B \overline C_{A r B r}
\end{equation*}
represent the components of $ \overline C_{A r B r }$ in a parallelly-transported adapted frame, as in Section~\ref{ss16XII13.4}. The tensor field
$\psi_{ab}$ is symmetric, with vanishing $\eta$-trace, and we have in space-time dimension four (cf., e.g., \cite[Section~5.7]{ChPaetz})
\begin{eqnarray}
 (\partial_r -\kappa)\chi_{ab} & = &
  - \sum_{c=2}^3
    \chi_{ac}\chi_{cb}
        -    \psi_{ab}  -\frac 1{2}\eta_{ab}\overline  T_{rr}
        \;.
\label{22I12.1x2}
\end{eqnarray}
Given $(\kappa, \psi_{ab})$, we can integrate this equation in vacuum to obtain the tensor field $\chi_{ab}$ needed in Section~\ref{ss16XII13.4}.
However, this approach leads to at least two difficulties: First, it is not clear under which conditions on $\psi_{ab}$ the solutions will exist for all values of $r$. Next, it is not clear that the global solutions will have the desired asymptotics. We will not address these questions but, taking into account the behaviour of the Weyl tensor under conformal transformations, we will assume that
\bel{22XII13.11}
\kappa =
 \mathcal{O}(r^{-3})
 \;,
 \quad
 \psi_{ab}=
 \mathcal{O}(r^{-4})
 \;,
\ee
and that the associated tensor field $\chi_{ab}$  exists globally and satisfies \eq{22XII13.1}. The no-logs-condition will then hold if and only if
\bel{22XII13.12}
 \psi_{ab}=
 \mathcal{O}(r^{-5})
  \qquad
  \Longleftrightarrow
  \qquad
  (\psi_{ab})_4= 0
 \;.
\ee

Note that one can reverse the procedure just described: given $\chi_{ab}$ we can use \eq{22I12.1x2} to determine $\psi_{ab}$.
Assuming \eq{22XII13.1}, the no-logs-condition will  hold if and only if the $\psi_{ab}$-components of the Weyl tensor vanish one order faster than naively expected from the asymptotic behaviour of the metric (cf.\ Section~\ref{no-logs_geom}).

\Eq{22XII13.12} is the well-known starting point of the analysis in \cite{Newman:Penrose}, and has also been obtained previously as a necessary condition for existence of a smooth $\scri$ in the analysis of the hyperboloidal Cauchy problem~\cite{andersson:chrusciel:PRL}. It is therefore not surprising that it reappears in the analysis of the characteristic Cauchy problem. However, as pointed out above, a satisfactory treatment of the problem using $\psi_{ab}$ as initial data requires further work.

\subsection{Characteristic surfaces intersecting transversally}
 \label{s10XII13.1}

Consider two characteristic surfaces, say $\mcN_1$ and $\mcN_2$, intersecting transversally along a smooth submanifold $S$ diffeomorphic
to $S^2$.
Assume moreover that the initial data on $\mcN_1$ (in any of the versions just discussed) are such that the metric $\ol g_{\mu\nu}$ admits a smooth conformal completion across the sphere $\{x=0\}$, as in Definition~\ref{definition_smooth}. The no-logs-condition  \eq{no-log-conditions} remains unchanged. Indeed, the only difference is the integration procedure for the constraint equations:
while on the light-cone we have been integrating from the tip of the light-cone, on $\mcN_1$ we   integrate  from the intersection surface~$S$. This leads to the need to provide supplementary data  at $S$  which render the solutions unique. Hence the asymptotic values of the fields, which arise from~the integration of the constraints, will also depend on the supplementary data at $S$.

\subsection{Mixed spacelike-characteristic initial value problem}
 \label{s10XII13.2}

Consider a mixed initial value problem, where the initial data set consists of:

\begin{enumerate}
\item
    A spacelike initial data set $(\hyp,{}^3g,K)$, where ${}^3g$
is a Riemannian metric on $\hyp$ and $K$ is a symmetric two-covariant tensor field on $\hyp$. The three-dimensional manifold $\hyp$ is supposed to have a compact smooth boundary $S$ diffeomorphic to $S^2$, and the fields $( {}^3g,K)$ are assumed to satisfy the usual vacuum Einstein constraint equations.

\item A hypersurface $\mcN_1$ with boundary $S$ equipped with a  characteristic initial data set, in any of the configurations discussed so far. Here $\mcN_1$ should be thought of as a characteristic initial data surface emanating from $S$ in the outgoing direction.
    \item The data on $\hyp$ and $\mcN_1$ satisfy a set of ``corner conditions'' at $S$, to be defined shortly.
\end{enumerate}

The usual evolution theorems for the spacelike general relativistic initial value problem provide a unique future maximal globally hyperbolic vacuum development $\mcD^+$ of  $(\hyp,{}^3g,K)$. Since $\hyp$ has a boundary, $\mcD^+$ will also have a boundary. Near $S$, the null part of the boundary of $\partial \mcD^+$ will be a smooth null hypersurface emanating from $S$, say $\mcN_2$, generated by null geodesics normal to $S$ and ``pointing towards $\hyp$'' at $S$.
In particular the space-time metric on $\mcD^+$  induces characteristic initial data on $\mcN_2$. In fact, all  derivatives of the metric, both in directions tangent and transverse to $\mcN_2$, will be determined on $\mcN_2$ by the  initial data set $(\hyp,{}^3g,K)$. This implies that the characteristic initial data needed on $\mcN_1$, as well as their derivatives in directions tangent to $\mcN_1$, are determined on $S$ by $(\hyp,{}^3g,K)$. These are the ``corner conditions'' which have to be satisfied by the data   on $\mcN_1$ at $S$, with these data being arbitrary otherwise. The corner conditions can be calculated algebraically in terms of the fields $( {}^3g,K)$,   the gauge-source functions $W^\mu$, and the  derivatives of those fields, at $S$, using the vacuum Einstein equations.

One can use now the  Cauchy problem  discussed in Section~\ref{s10XII13.1} to obtain the metric to the future of $\mcN_1\cup \mcN_2$, and the discussion of the no-logs-condition given in Section~\ref{s10XII13.1} applies.


\vspace{1.2em}
\noindent {\textbf {Acknowledgments}}
 Supported in part by the  Austrian Science Fund (FWF): P 24170-N16. Parts of this material are based upon work supported by the National Science Foundation under Grant No. 0932078 000, while the authors were in residence at the Mathematical Sciences Research Institute in Berkeley, California, during the fall semester of 2013.

\appendix
\section{Polyhomogeneous functions}
 \label{A22XII13.1}

A function $f$ defined on an open set $\mcU$ with smooth boundary $\partial \mcU =\{x=0\}$ is said to be \emph{polyhomogeneous} at $x=0$  if $f\in
 {C}^\infty (\mcU)$  and if there exist integers $N_i$, real numbers
$n_i, $ and functions $f_{ij}\in  {C}^{\infty}(\ol \mcU)$
such that
\bel{Edef2}\forall m \in \N, \quad \exists N(m)\in \N,  \quad
f-\sum_{i=0}^{N(m)}\sum_{j=0}^{N_i}f_{ij}x^{n_i}\ln^j x \in
 {C}^m (\ol \mcU)
\;.
\ee
We will then write
$$
 f \sim \sum_{i,j}f_{ij}x^{n_i}\ln^j x
 \;.
$$

\bibliographystyle{amsplain}
\bibliography{../references/hip_bib,%
../references/reffile,%
../references/newbiblio,%
../references/newbiblio2,%
../references/chrusciel,%
../references/bibl,%
../references/howard,%
../references/bartnik,%
../references/myGR,%
../references/newbib,%
../references/Energy,%
../references/dp-BAMS,%
../references/prop2,%
../references/besse2,%
../references/netbiblio,%
../references/PDE}

\end{document}